\documentclass[12pt]{article}
\usepackage[margin=1in]{geometry}

\usepackage{style/qstyle}
\usepackage[normalem]{ulem}

\title{Almost perfect strategies for projection\\ games are approximately tracial}

\author{Eric Culf}
\affil{University of Waterloo}
\date{\vspace{-1cm}}

\begin{document}

\maketitle

\begin{abstract}
    Projection games constitute an important class of nonlocal games where, for any answer from the first player, there is a unique correct answer for the second player. This class of games captures nonlocal games arising from constraint satisfaction problems, oracularisations, and unique games. However, due to the asymmetry between the players, projection games are in general not synchronous, and therefore the powerful results constraining the structure of almost perfect strategies for synchronous games do not apply. In this work, we adapt results of Marrakchi and de la Salle~\cite{MdlS23} for synchronous games to show that, in both the quantum and commuting-operator models, any strategy that wins with probability $1-\varepsilon$ in a projection game gives rise to a tracial strategy that wins with probability $1-O((L\varepsilon)^{1/4})$, where $L$ is the inverse of the minimal conditional probability of a question for the second player being sampled given a question to the first. For constraint system games, this strengthens the rounding result of Paddock~\cite{Pad22} by eliminating the dependence on number of constraints and improving the dependence on constraint size, while also generalising to the commuting-operator setting.
\end{abstract}

\section{Introduction}

In a nonlocal game, a referee sends questions to non-communicating players Alice and Bob, who provide answers to the referee; the players win if the questions and answers satisfy a predetermined predicate function. If the players share an entangled state as part of their strategy, they can often win with higher probability~\cite{CHTW04}.

Projection games are a well-structured class of games, that have long been objects of study in the context of classical approximation algorithms. A nonlocal game is a projection game if, for any one of Alice's answers, there exists at most one correct answer Bob may give, for a given pair of questions. Projection games are known to satisfy strong parallel repetition properties~\cite{Rao08,DS14}, which also extend to the quantum setting~\cite{DSV15}.

A motivating example for projection games arises in the context of \emph{constraint systems} (CSs). A CS consists of variables and a collection of constraints over an alphabet on subsets of the variables such that, when the variables are assigned values from the alphabet, all of the constraints should be satisfied. Many natural computational problems, such as graph colouring, can be phrased as constraint systems.

There are two natural ways to present a constraint system as a nonlocal game. We refer to the first as the \emph{constraint-variable} framework, following the notation of Culf and Mastel~\cite{CM24}; it is the original framework, proposed by Cleve and Mittal~\cite{CM14}. Here, the referee samples a constraint according to a given probability distribution on the constraint system and a variable uniformly from the constraint. Alice is given the constraint, and needs to answer with a satisfying assignment; Bob is given the variable and needs to answer with an element of the alphabet. They win if Bob's assignment is consistent with Alice's. Since Bob's correct answer is the unique consistent assignment to his variable, this presentation is a projection game.

There is a second presentation, used in~\cite{Arkh12,PS25,MS24}, which we refer to as the \emph{constraint\hspace{0cm}-constraint} framework. Here, the referee samples two constraints according to a probability distribution and gives one to each player. Then, Alice and Bob have to reply with satisfying assignments to their constraint. They win if the assignments agree on all overlapping variables. This presentation is however not a projection game.

The two frameworks agree on perfect strategies, in the sense that if there is a classical, quantum, or commuting operator strategy that wins with probability $1$ in one framework, there is also a strategy that wins with probability $1$ in the other framework~\cite{CM24}. However, for general (non-perfect) strategies there are differences, which give rise to benefits and drawbacks for working in either framework. For example, unlike in the constraint-variable framework, the games in the constraint-constraint framework are symmetric between the two players. On the other hand, the constraint-variable framework provides -- via Bob's operators -- a global operator representation for every variable in the CS, unlike in the constraint-constraint framework, where there are only local representations corresponding to each constraint.

Importantly, the constraint-constraint game for a CS is a synchronous game. In a synchronous game, the question distribution is symmetric between the players, and whenever they get the same question, they win only if they respond with the same answer~\cite{PSSTW16}. This simple condition on the predicate turns out to tightly constrain the algebraic structure of perfect strategies. In particular, any synchronous quantum correlation can be expressed as a \emph{tracial correlation}, wherein Alice's and Bob's operators are modelled on the same Hilbert space with respect to a tracial state. In fact, assuming the game checks the synchronicity condition with high probability via consistency checks, almost perfect quantum correlations are close to tracial correlations~\cite{Vid22}. This has also been extended to general commuting operator correlations in infinite dimensions~\cite{MdlS23}. An alternate approach in the commuting operator model was studied in~\cite{Lin23}.

The situation is strikingly similar in the constraint-variable framework. In fact, due to~\cite{CM14,ABdSZ17}, it is known that perfect quantum strategies can also be expressed as tracial correlations. Further, almost perfect quantum strategies are close to tracial strategies, in the strong sense that the operators of an almost perfect strategy can be rounded to the operators of a tracial strategy~\cite{Pad22}. However, the error incurred by this rounding does not scale well with the question and answer size --- scaling polynomially with the number of constraints and exponentially with the number of variables per constraint.

This scaling causes issues in translating between explicit strategies for the game and settings where only working with tracial correlations is beneficial, such as the weighted algebra formalism~\cite{MS24,CM24}. One way to get around this is to synchronise the game by symmetrising the probability distribution and adding consistency checks, but this is quite inelegant and ad hoc.

A similar situation arises in the context of oracularisation. The \emph{oracularisation} of a nonlocal game transforms it into a new game where one of the two players knows the whole output statistics, in the case of a perfect strategy. This is often important in applications of nonlocal games~\cite{JNV+21}. A simple way to achieve this is by asking Alice both questions and Bob one of the two questions, chosen at random. To guarantee that the nice tracial form of the strategies is preserved, prior work has made use of a slightly more complicated version of oracularisation, which outputs a synchronous game~\cite{JNV+21,MS24}.

In this work, we extend the techniques of~\cite{MdlS23} to the setting of projection games. More precisely, we show the following:

\begin{faketheorem}[\cref{thm:main,cor:approx-finite,cor:main-degenerate}]
    Suppose that there is a commuting operator correlation that wins with probability $1-\varepsilon$ in a constraint-variable CS game. Then, there is a tracial correlation that wins with probability at least $1-O((L\varepsilon)^{1/4})$, where $L$ is the inverse of the minimal probability of Bob's questions conditioned on one of Alice's questions. Furthermore, if the original correlation is quantum (\emph{i.e.} finite-dimensional), then so is the tracial correlation.
\end{faketheorem}

This result differs from the analogous result for synchronous games only in the inverse-\hspace{0cm}polynomial dependence on the conditional question probability. For CS games, this is exactly the maximum context size. This arises from the fact that Alice's answer length may be much longer than Bob's, since the projection functions need only map from Alice's answer to Bob's. Due to this, in the proof, Alice's measurements correspond to each of Bob's possible questions need to be pieced together to give properties of the joint measurement, introducing a dependence on the probability of the least likely of Bob's questions. Whether a dependence on the context size is necessary in this setting remains an open question. However, for the applications of constraint-variable CS games we are interested in, such as constraint satisfaction problems as in~\cite{CM24}, the context size is a constant, giving winning probability $1-O(\varepsilon^{1/4})$. The same holds for oracularisations.

To prove our main result, we make use of techniques introduced in~\cite{Vid22} and expanded on in~\cite{MdlS23}. There, Alice and Bob's measurement operators are studied in the context of the von Neumann algebras they generate, where Connes' lemma~\cite{Con76} (or a generalised version in the commuting operator case~\cite{MdlS23}) is used to transform the measurements to measurements in a von Neumann algebra where the shared state becomes tracial. To be able to apply this result in the present work, we need to show that jointly-measurable near-projective measurements have a joint measurement that is near projective. This lets us round Alice and Bob's measurements to projections. This needs to be done in order to translate properties about Alice's measurement on each variable to properties of her joint measurement on each constraint.

\paragraph{Outlook} Many works on nonlocal games rely on the synchronous form of the game to simplify the structure of the perfect strategies~\cite{JNV+21,MS24,CM24}. However, we show that an analogous property holds for projection games, which are already known to satisfy strong properties such as exponential parallel repetition~\cite{DSV15}, which provides an avenue to simplify manipulations of nonlocal games.

Projection games and synchronous games provide two large classes of nonlocal games where the perfect strategies are tracial in a robust way. Are there other meaningful examples where this arises?

\paragraph{Outline} In \cref{sec:prelims}, we introduce the necessary background material from the theory of von Neumann algebras that go into this work; and in \cref{sec:nonlocal-games}, we recall the background on nonlocal games and study some basic properties of projection games. Next, in \cref{sec:perfect-q}, we extend the results of~\cite{CM14,ABdSZ17} on the structure of perfect quantum strategies to projection games. In \cref{sec:perfect-co}, we extend the results of the previous section to the setting of commuting operator correlations. In \cref{sec:approx-co}, we prove the main result for both the quantum and commuting operator setting. Finally, in \cref{sec:applications}, we discuss the consequences of the main result in the context of CS games and oracularisations.

\paragraph{Acknowledgements} The author would like to thank Archishna Bhattacharyya for helpful comments and discussions, and Connor Paddock for useful comments on an earlier version of this manuscript. The author is supported by a CGS D award from Canada's NSERC.

\section{Mathematical preliminaries}\label{sec:prelims}

\subsection{Notation}

For $m\in\N$, write $[m]=\{1,2,\ldots,m\}\subseteq\N$. We denote elements of a Cartesian product $A_1\times\cdots\times A_n$ with bold face as $\mbf{a}=(a_1,\ldots,a_n)$.

We consider only probability distributions on finite sets, in which case we can represent a probability distribution $\pi$ on $X$ by its probability density function, which we denote $\pi:X\rightarrow[0,1]$. We write the expectation value of some function $f:X\rightarrow\C$ as $\expec_{x\leftarrow\pi}f(x)=\sum_{x\in X}\pi(x)f(x)$.

For a finite-dimensional Hilbert space $H=\C^n$, we always fix a standard basis $\set*{\ket{i}}{i=1,\ldots,n}$. A \emph{positive operator valued measurement (POVM)} with (finite) set of outcomes $A$ on a Hilbert space $H$ is a collection of positive operators $\{P_{a}\}_{a\in A}\subseteq\mc{B}(H)$ such that $\sum_{a\in A}P_a=1$. $\{P_a\}_{a\in A}$ is a \emph{projection valued measure (PVM)} if additionally all the $P_a$ are projections. In that case $P_aP_{a'}=\delta_{a,a'}P_a$. We write the commutator of two elements $A,B\in\mc{B}(H)$ as $[A,B]=AB-BA$. Given finite sets $I_1,\ldots,I_n$ and a POVM $\{A_{\mbf{i}}\}_{\mbf{i}\in I_1\times\cdots\times I_n}$, we write
$$A^j_{i}=\sum_{\mbf{i}\in \prod_kI_k:\;i_j=i}A_{\mbf{i}}.$$
Note that $\{A^j_i\}_{i\in I_j}$ is a POVM. Conversely, we say POVMs $\{A^1_i\}$, ... , $\{A^n_i\}$ are \emph{jointly measurable} if there is a POVM $\{A_{\mbf{i}}\}$, called the \emph{joint measurement}, such that $A^j_i=\sum_{\mbf{i}:\;i_j=i}A_{\mbf{i}}$.

\subsection{von Neumann algebras}

A \emph{von Neumann algebra} $\mc{M}$ is a $\ast$-subalgebra of $\mc{B}(H)$ for some Hilbert space $H$ that is closed under the weak-$\ast$ topology. We always assume $\mc{H}$ is separable. Equivalently, $\mc{M}$ is a $\ast$-subalgebra that is closed under the double commutant, where the \emph{commutant} of a set $S\subseteq\mc{B}(H)$ is $S'=\set*{x\in\mc{B}(H)}{[x,s]=0\;\forall\;s\in S}$. More abstractly, a von Neumann algebra is a $C^\ast$-algebra that is the dual of some Banach space, called the \emph{predual} and denoted $\mc{M}_\ast$. As a $C^\ast$-algebra, $\mc{M}$ is a equipped with a positive cone $\mc{M}_+=\set*{x^\ast x}{x\in\mc{M}}$. 

A \emph{weight} on a von Neumann algebra is a map $\tau:\mc{M}_+\rightarrow[0,\infty]$ satisfying $\tau(x+y)=\tau(x)+\tau(y)$ and $\tau(\lambda x)=\lambda\tau(x)$ for all $x,y\in\mc{M}_+$ and $\lambda\geq 0$ (where we take $0\cdot\infty=0$). A weight is \emph{tracial} if $\tau(x^\ast x)=\tau(xx^\ast)$ for all $x\in\mc{M}$, in which case it is called a \emph{trace}. A weight is \emph{normal} if for every SOT-convergent increasing net $(x_\alpha)_\alpha$ in $\mc{M}_+$, $\lim_\alpha\tau(x_\alpha)=\tau(\lim_\alpha x_\alpha)$. A weight is \emph{faithful} if $\tau(x)=0$ iff $x=0$. A weight is \emph{finite} if $\tau(1)<\infty$; in this case $\infty\notin\im\mc{M}_+$ and $\tau$ can be extended to a positive linear functional on all $\mc{M}$. A finite weight satisfying $\tau(1)=1$ is called a \emph{state}; for a state, being normal is equivalent to having a concrete GNS representation as a state on a Hilbert space. A weight is \emph{semifinite} if for every $x\in\mc{M}_+$, there exists an increasing net $(x_\alpha)_\alpha$ in $\mc{M}_+$ such that $\lim_\alpha x_\alpha =x$ in the strong operator topology and $\tau(x_\alpha)<\infty$ for all $\alpha$.

\subsection{Noncommutative $L^p$ spaces}\label{sec:noncom-lp}

In \cref{sec:perfect-co,sec:approx-co}, we make use of Haagerup's noncommutative $L^p$ spaces~\cite{Haa79,Ter81}, as for synchronous games in~\cite{MdlS23}.

Let $\mc{M}\subseteq\mc{B}(H)$ be a von Neumann algebra. From Tomita-Takesaki theory~\cite{Tak03}, there exists a unique (up to isomorphism) von Neumann algebra $c(\mc{M})\subseteq \mc{B}(c(H))$, called the core of $\mc{M}$, equipped with a normal faithful semifinite trace $\tau$, a continuous group homomorphism $\theta:\R\rightarrow\Aut(\mc{M})$, and an embedding $\iota:\mc{M}\hookrightarrow c(\mc{M})$, satisfying $\im\iota=\set*{x\in\mc{M}}{\theta_s(x)=x\;\forall\,s\in\R}$ and $\tau\circ\theta_s=e^{-s}\tau$.

Let $\mc{N}\subseteq\mc{B}(H)$ be a von Neumann algebra with a normal faithful semifinite trace $\tau$. An unbounded operator $y$ on $H$ is \emph{affiliated} with $\mc{N}$ if $yx\subseteq xy$ for all $x\in\mc{N}'$. We call a closed densely-defined $y$ affiliated with $\mc{N}$ \emph{$\tau$-measurable} if for all $\delta>0$, there exists a projection $p\in\mc{N}$ such that $\im p\subseteq D(y)$ and $\tau(1-p)<\delta$. The set of affiliated operators is a $\ast$-algebra with respect to the strong sum and product, \emph{i.e.} standard sums and products of operators are closable and we take the strong versions of the operations to be their respective closures. It admits a Hausdorff topology with respect to which $\mc{N}$ is dense.

For $p\geq 1$, define $L^p(\mc{M})$ to be the set of $\tau$-measurable operators affiliated to $c(\mc{M})$ that are $e^{-s/p}$-eigenvectors of $\theta_s$ for all $s$. The map $x\mapsto\norm{x}_p=\tau(\chi_{(1,\infty)}(|x|))^{1/p}$ is a norm with respect to which $L^p(\mc{M})$ is complete. We have natural identifications $L^\infty(\mc{M})\cong\mc{M}$, $L^1(\mc{M})\cong\mc{M}_\ast$. For $x\in L^1(\mc{M})$, define $\tr(x)=\varphi(1)$, where $\varphi\in\mc{M}_\ast$ is the functional corresponding to $x$. $\tr$ is a linear functional on $L^1(\mc{M})$ satisfying $\tr(xy)=\tr(yx)$ for $x\in L^p(\mc{M})$ and $y\in L^q(\mc{M})$ with $\frac{1}{p}+\frac{1}{q}=1$. We can write the $p$-norm as $\norm{x}_p=\tr(|x|^p)^{1/p}$; it follows that for positive $x\in L^1(\mc{M})$, we have that $\tr(x)=\tau(\chi_{(1,\infty)}(x))$. We can define the inner product $\ang*{x,y}=\tr(x^\ast y)$ on $L^2(M)$, which makes it into a Hilbert space.

\begin{example}\label{ex:type-i}
    To illustrate the construction, consider the case $\mc{M}=\mbb{M}_n$. In this case, we can simply take $c(H)=L^2(\R)\otimes\C^n$ and $c(\mc{M})=L^\infty(\R)\otimes\mbb{M}_n$, the space of bounded measurable matrix-valued functions on the real line. In this case, $\theta_s(x)(t)=x(t-s)$ and the trace $\tau(x)=\int_{-\infty}^\infty \Tr(x(t))e^{-t}dt$, where $\Tr$ is the usual trace on $\mbb{M}_n$. The unbounded operators affiliated with $c(\mc{M})$ are the measurable functions $\R\rightarrow\mbb{M}_n$; and all of these are $\tau$-measurable. $L^p(\mc{M})$ consists of those $\tau$-measurable $x$ such that $x(t)=x(0)e^{t/p}$; hence is in this case $L^p(\mc{M})$ is isomorphic to $\mbb{M}_n$ for all $p$. The isomorphism $L^1(\mc{M})\rightarrow\mc{M}_\ast$ is given by $x\mapsto\Tr(\cdot x(0))$, and hence $\tr(x)=\Tr(x(0))$. The $p$-norms in this case reduce to the Schatten norms. The same holds for any finite von Neumann algebra.
\end{example}

\section{Nonlocal games}\label{sec:nonlocal-games}

\begin{definition}
    A \emph{nonlocal game} is a tuple $G=(I,J,\{A_i\}_{i\in I},\{B_j\}_{j\in J},\mu,V)$, where $I,J,A_i,B_j$ are finite sets, representing the question and answer sets, respectively; $\mu$ is a probability distribution on $I\times J$, representing the question distribution; and $V:\bigcup_{(i,j)\in I\times J}\{(i,j)\}\times A_i\times B_j\rightarrow\{0,1\}$ is a function, called the predicate function, representing the correct answers.

    A strategy for a nonlocal game is given by a \emph{correlation}, a function $c:\bigcup_{(i,j)\in I\times J}\{(i,j)\}\times A_i\times B_j\rightarrow[0,1]$ such that $c(\cdot,\cdot|i,j)$ is a probability distribution for all $i\in J$ and $j\in J$. 
    \begin{itemize}
        \item $c$ is a \emph{quantum correlation} if there exist finite-dimensional Hilbert spaces $H_A$ and $H_B$, POVMs $\{A^i_a\}_{a\in A_i}\subseteq\mc{B}(H_A)$ and $\{B^j_b\}_{b\in B_j}\subseteq\mc{B}(H_B)$ for all $i\in I$, $j\in J$, and a unit vector $\ket{\psi}\in H_A\otimes H_B$ such that $c(a,b|i,j)=\braket{\psi}{A^i_a\otimes B^j_b}{\psi}$. 
        
        \item $c$ is a \emph{commuting operator} correlation if there exists a Hilbert space $H$, POVMs $\{A^i_a\}_{a\in A_i}\subseteq\mc{B}(H)$ and $\{B^j_b\}_{b\in B_j}\subseteq\mc{B}(H)$, and a unit vector $\ket{\psi}\in H$ such that $[A^i_a,B^j_b]=0$ for all $i,j,a,b$ and $c(a,b|i,j)=\braket{\psi}{A^i_aB^j_b}{\psi}$. 
        
        \item $c$ is a \emph{tracial} correlation if there exists a von Neumann algebra $\mc{M}\subseteq\mc{B}(H)$, PVMs $\{A^i_a\}_{a\in A_i}\subseteq\mc{M}$ and $\{B^j_b\}_{b\in B_j}\subseteq\mc{M}$, and a normal tracial state $\tau:\mc{M}\rightarrow\C$ such that $c(a,b|i,j)=\tau(A^i_aB^j_b)$.
    \end{itemize}
    
    The \emph{winning probability} of a correlation $c$ for a nonlocal game $G$ is
    \begin{align*}
        \mfk{w}_G(c)=\expec_{(i,j)\leftarrow\mu}\sum_{a\in A_i,b\in B_j}V(a,b|i,j)c(a,b|i,j).
    \end{align*}
    We say that $c$ is a \emph{perfect strategy} if $\mfk{w}_G(c)=1$. 
\end{definition}

It is a standard fact that every tracial correlation is a commuting-operator correlation, by taking the standard form of $\mc{M}$~\cite{Haa75}. Further, if $H$ is finite-dimensional, then $c$ is also a quantum correlation.

\begin{definition}
    A nonlocal game $G=(I,J,\{A_i\}_{i\in I},\{B_j\}_{j\in J},\mu,V)$ is a \emph{projection game} if for all $i\in I$ and $j\in J$, there exists a function $p_{i,j}:A_i\rightarrow B_j$ such that $V(a,b|i,j)=\delta_{b,p_{i,j}(a)}$.
\end{definition}

Now, we introduce a simple new definition which helps us control the structure of a projection game.

\begin{definition}
    We say a projection game $G$ is \emph{non-degenerate} if the function $p_i:A_i\rightarrow\prod_{\substack{j\in J:\\\mu(i,j)>0}}B_j$ defined as $p_{i}(a)=(p_{i,j}(a))_j$ is injective for all $i\in I$.
\end{definition}

The major examples of projection games mentioned above, constraint-variable CS games and oracularisations, are always non-degenerate.

If a game is degenerate, whenever Alice answers $a$, she has the freedom to choose any element of $p_i^{-1}(p_i(a))$ without affecting the winning probability. This can cause some problems in pinpointing the behaviour of the players. However, we can always reduce to the case of non-degenerate projection games, while preserving the important properties of the game.

\begin{lemma}\label{lem:degenerate-games}
    Let $G$ be a projection game. Then, there exists a non-degenerate projection game $\tilde{G}$ such that for every correlation $c$ for $G$, there exists a correlation $\tilde{c}$ for $\tilde{G}$ such that $\mfk{w}_{G}(c)=\mfk{w}_{\tilde{G}}(\tilde{c})$, and vice versa. Further, the mappings from $c$ to $\tilde{c}$ and from $\tilde{c}$ to $c$ preserve the classes of quantum, commuting operator, and tracial correlations.
\end{lemma}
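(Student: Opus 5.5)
The plan is to quotient Alice's answer sets by the equivalence relation induced by $p_i$. Define $\tilde{G}$ with the same question sets $I,J$, the same distribution $\mu$, and the same answer sets $B_j$ for Bob. For Alice, set $\tilde{A}_i=p_i(A_i)\subseteq\prod_{j:\mu(i,j)>0}B_j$, the image of $p_i$. For $\mu(i,j)>0$, the new projections are $\tilde{p}_{i,j}(\mathbf{b})=b_j$, the coordinate projection. When $\mu(i,j)=0$, we may take $\tilde{p}_{i,j}$ to be an arbitrary function, since such question pairs never contribute to the winning probability. Then $\tilde{p}_i$ is the inclusion $\tilde{A}_i\hookrightarrow\prod_j B_j$, which is injective, so $\tilde{G}$ is a non-degenerate projection game. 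By construction, $\tilde{p}_{i,j}(p_i(a))=p_{i,j}(a)$ whenever $\mu(i,j)>0$.

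\emph{From $G$ to $\tilde{G}$.} Given a correlation $c$ for $G$, define $\tilde{c}(\mathbf{b},b|i,j)=\sum_{a\in p_i^{-1}(\mathbf{b})}c(a,b|i,j)$, i.e.\ coarse-grain Alice's outcomes along $p_i$. On the operator side, we replace $\{A^i_a\}$ by $\tilde{A}^i_{\mathbf{b}}=\sum_{a\in p_i^{-1}(\mathbf{b})}A^i_a$ and keep Bob's operators, the state, and the trace unchanged. Sums of POVM elements over a partition form a POVM. Sums of pairwise orthogonal projections in a PVM form a PVM inside the same von Neumann algebra, and commutation with Bob's operators is preserved. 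Hence the quantum, commuting operator, and tracial classes are all preserved. For the winning probability, for each $(i,j)$ with $\mu(i,j)>0$ we have $\sum_{\mathbf{b},b}\delta_{b,b_j}\tilde{c}(\mathbf{b},b|i,j)=\sum_{a,b}\delta_{b,p_{i,j}(a)}c(a,b|i,j)$, because $\tilde{p}_{i,j}\circ p_i=p_{i,j}$. Summing against $\mu$ gives equal winning probabilities.

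\emph{From $\tilde{G}$ to $G$.} Fix a section $s_i:\tilde{A}_i\to A_i$ of $p_i$, which exists because $p_i$ is surjective onto its image. Given $\tilde{c}$, set $c(a,b|i,j)=\tilde{c}(\mathbf{b},b|i,j)$ if $a=s_i(\mathbf{b})$, and $0$ otherwise. Correspondingly, define operators $A^i_a=\tilde{A}^i_{\mathbf{b}}$ if $a=s_i(\mathbf{b})$ and $A^i_a=0$ otherwise. Inserting zero operators keeps a POVM a POVM and a PVM a PVM, so again every class is preserved. Since $p_{i,j}(s_i(\mathbf{b}))=\tilde{p}_{i,j}(\mathbf{b})$ for $\mu(i,j)>0$, the same computation shows the winning probabilities agree.

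No step is genuinely hard here. The only point requiring care is that the projections $\tilde{p}_{i,j}$ are determined only on pairs with $\mu(i,j)>0$. This is harmless because those pairs carry zero weight in $\mathfrak{w}$, but the definition of $\tilde{G}$ must be stated so that $\tilde{p}_{i,j}$ is still a genuine function for every $(i,j)$. One should also check that non-degeneracy is defined using exactly the support of $\mu$, which it is.
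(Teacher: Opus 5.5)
Your proof is correct and is essentially the paper's argument: the paper takes $\tilde{A}_i\subseteq A_i$ to be a set of representatives of the fibres of $p_i$, with $\tilde{p}_{i,j}=p_{i,j}|_{\tilde{A}_i}$. That is your construction transported along the bijection $p_i|_{\tilde{A}_i}\colon \tilde{A}_i\to p_i(A_i)$ (your section $s_i$ is exactly that choice of representatives), and the two maps on correlations — coarse-graining Alice's outcomes over fibres, and extending by zero — are the same as yours.
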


Though it is not relevant to this work, the mappings also preserve every other natural (closed under classical post-processing) class of correlations by construction, notably the classical correlations, the non-signalling correlations, and the levels of the NPA hierarchy.

\begin{proof}
    Let $\tilde{A}_i\subseteq A_i$ be a set of representatives for the partition of $A_i$ induced by $p_i$ --- that is, for every $a\in A_i$, there exists a unique $\tilde{a}\in A_i$ such that $p_{i}(\tilde{a})=p_{i}(a)$. Then, we can define $\tilde{V}$ as the restriction of $V$ to $\bigcup_{i,j}\{(i,j)\}\times \tilde{A}_i\times B_j$, and the nonlocal game $\tilde{G}=(I,J,\{\tilde{A}_i\}_{i\in I},\{B_j\}_{j\in J},\mu,\tilde{V})$. By definition, this is a projection game via the projection functions $\tilde{p}_{i,j}=p_{i,j}|_{\tilde{A}_i}$. Now, let $c$ be a correlation for $G$. Define $\tilde{c}$ as the correlation $\tilde{c}(a,b|i,j)=\sum_{a'\in p_i^{-1}(p_i(a))}c(a',b|i,j)$. Since $\tilde{c}$ is defined from $c$ via classical post-processing on Alice's system, the mapping preserves quantum, commuting operator, and tracial correlations. It also preserves the winning probability as
    \begin{align*}
        \mfk{w}_{\tilde{G}}(\tilde{c})&=\expec_{(i,j)\leftarrow\mu}\sum_{a\in \tilde{A}_i, b\in B_j}\tilde{V}(a,b|i,j)\tilde{c}(a,b|i,j)\\
        &=\expec_{(i,j)\leftarrow\mu}\sum_{a\in \tilde{A}_i, b\in B_j}\tilde{c}(a,p_{i,j}(a)|i,j)\\
        &=\expec_{(i,j)\leftarrow\mu}\sum_{a\in \tilde{A}_i, b\in B_j}\sum_{a'\in p_i^{-1}(p_i(a))}c(a',p_{i,j}(a)|i,j)\\
        &=\expec_{(i,j)\leftarrow\mu}\sum_{a'\in A_i, b\in B_j}c(a',p_{i,j}(a')|i,j)\\
        &=\mfk{w}_G(c).
    \end{align*}
    For the other direction, let $\tilde{c}$ be a correlation for $\tilde{G}$. Then, define the correlation $c$ for $G$ as $c(a,b|i,j)=\tilde{c}(a,b|i,j)$ when $a\in\tilde{A}_i$ and $c(a,b|i,j)=0$ otherwise. As before, it is easy to see that this mapping preserves quantum, commuting-operator, and tracial correlations. Finally it preserves the winning probability as
    \begin{align*}
        \mfk{w}_{G}(c)&=\expec_{(i,j)\leftarrow\mu}\sum_{a\in A_i,b\in B_j}V(a,b|i,j)c(a,b|i,j)\\
        &=\expec_{(i,j)\leftarrow\mu}\sum_{a\in \tilde{A}_i,b\in B_j}\tilde{V}(a,b|i,j)\tilde{c}(a,b|i,j)\\
        &=\mfk{w}_{\tilde{G}}(\tilde{c}).
    \end{align*}
\end{proof}

For the majority of the paper, we will assume that we are dealing with non-degenerate projection games, but using the above lemma, we can guarantee that the main result also holds for degenerate games.

\section{Perfect quantum case}\label{sec:perfect-q}

In this section, we extend the result of \cite{CM14,ABdSZ17} showing that perfect correlations for constraint-variable CS games are tracial to the setting of general projection games. We express the proofs using the von Neumann algebra language that we make use of throughout this work.

\begin{lemma}[\emph{e.g.} \cite{ABdSZ17}]\label{lem:simplifying-q-corrs}
    Let $c$ be a quantum correlation. Then, there exist a finite-\hspace{0cm}dimensional Hilbert space $H=\C^d$, POVMs $\{P^i_a\}_{a\in A_i}\subseteq\mc{B}(H)$ and $\{Q^j_b\}_{b\in B_j}\subseteq\mc{B}(H)$, and a unit vector $\ket{\phi}=\sum_{n=1}^d\lambda_n\ket{n}\ket{n}\in H\otimes H$ with $\lambda_n>0$ such that $c(a,b|i,j)=\braket{\phi}{P^i_a\otimes Q^j_b}{\phi}$.
\end{lemma}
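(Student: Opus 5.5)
We start from a quantum correlation realised as $c(a,b|i,j)=\braket{\psi}{A^i_a\otimes B^j_b}{\psi}$ with $H_A,H_B$ finite-dimensional, and we reduce it to a symmetric Schmidt form. The plan has three steps: take a Schmidt decomposition, restrict to the supports, and then identify both supports with one space $\C^d$.

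First, take the Schmidt decomposition $\ket{\psi}=\sum_{n=1}^d\lambda_n\ket{e_n}\ket{f_n}$. Here $\lambda_n>0$, the vectors $\{e_n\}\subseteq H_A$ are orthonormal, the vectors $\{f_n\}\subseteq H_B$ are orthonormal, and $d$ is the Schmidt rank. Let $\Pi_A$ be the orthogonal projection onto $\mathrm{span}\{e_n\}$ and $\Pi_B$ the projection onto $\mathrm{span}\{f_n\}$. Because $\ket{\psi}$ lies in the range of $\Pi_A\otimes\Pi_B$, we have $(\Pi_A\otimes\Pi_B)\ket{\psi}=\ket{\psi}$. Therefore
\[
c(a,b|i,j)=\braket{\psi}{\Pi_AA^i_a\Pi_A\otimes\Pi_BB^j_b\Pi_B}{\psi}.
\]

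Second, consider the compressed operators $\Pi_AA^i_a\Pi_A$. Viewed as operators on the range of $\Pi_A$, they are positive and sum to $\Pi_A$, which is the identity on that range. So they form a POVM on the range of $\Pi_A$, and the same holds for Bob's compressed operators on the range of $\Pi_B$. Projectiveness may be lost under compression, but the statement only asks for POVMs, so this is harmless.

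Third, fix the unitaries $U:\C^d\to\mathrm{ran}\,\Pi_A$ with $U\ket{n}=e_n$ and $W:\C^d\to\mathrm{ran}\,\Pi_B$ with $W\ket{n}=f_n$. Define
\[
P^i_a=U^\ast\Pi_AA^i_a\Pi_AU,\qquad Q^j_b=W^\ast\Pi_BB^j_b\Pi_BW.
\]
Then $(U\otimes W)\ket{\phi}=\ket{\psi}$ for $\ket{\phi}=\sum_n\lambda_n\ket{n}\ket{n}$, which gives $c(a,b|i,j)=\braket{\phi}{P^i_a\otimes Q^j_b}{\phi}$. The vector $\ket{\phi}$ is a unit vector because $\sum_n\lambda_n^2=1$.

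No step is a real obstacle. The only point that needs care is the compression step: one must check that $\ket{\psi}$ is invariant under $\Pi_A\otimes\Pi_B$, so that discarding the complements changes no probability, and that the compressed family still sums to the identity on the smaller space. Both facts follow directly from the Schmidt form.
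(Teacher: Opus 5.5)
Your proof is correct and is essentially the paper's argument: the paper also takes the Schmidt decomposition and conjugates the POVMs by the isometries $\C^d\to H_A$, $\ket{n}\mapsto\ket{\psi_n}$ and $\C^d\to H_B$, $\ket{n}\mapsto\ket{\phi_n}$. Your separate compression by $\Pi_A,\Pi_B$ is harmless but redundant: viewing $U$ as an isometry into $H_A$, we have $\Pi_AU=U$, so $U^\ast\Pi_AA^i_a\Pi_AU=U^\ast A^i_aU$.
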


\begin{proof}
    Since $c$ is a quantum correlation, there exist some finite-dimensional Hilbert spaces $H_A,H_B$, POVMs $\{A^i_a\}_{a\in A_i}\subseteq\mc{B}(H_A)$, $\{B^j_b\}_{b\in B_j}\subseteq\mc{B}(H_B)$, and unit vector $\ket{\psi}\in H_A\otimes H_B$ such that $c(a,b|i,j)=\braket{\psi}{A^i_a\otimes B^j_b}{\psi}$. Let $\ket{\psi}=\sum_{n=1}^d\lambda_n\ket{\psi_n}\ket{\phi_n}$ be the Schmidt decomposition of $\ket{\psi}$, for $d$ the Schmidt rank of $\ket{\psi}$. Then, for $H=\C^d$, define the isometries $U:H\rightarrow H_A$ and $V:H\rightarrow H_B$ as $U=\sum_n\ketbra{\psi_n}{n}$ and $V=\sum_n\ketbra{\phi_n}{n}$. Letting $P^i_a=U^\ast A^i_a U$ and $Q^j_b=V^\ast B^j_b V$, we get that
    \begin{align*}
        &c(a,b|i,j)=\braket{\psi}{A^i_a\otimes B^j_b}{\psi}=\braket{\phi}{(U\otimes V)^\ast(A^i_a\otimes B^j_b)(U\otimes V)}{\phi}=\braket{\phi}{P^i_a\otimes Q^j_b}{\phi}.\qedhere
    \end{align*}
\end{proof}

In the main theorem of this section, we make use of the following result.

\begin{proposition}[\cite{HRS08}]\label{prop:joint-measurable}
    Let $\{A^1_i\}$, ... , $\{A^n_i\}$ be jointly measurable PVMs. Then the joint measurement $A_{i_1...i_n}=A^1_{i_1}\cdots A^n_{i_n}$ and $\{A_{i_1,\ldots,i_n}\}_{i_1,\ldots,i_n}$ is a PVM.
\end{proposition}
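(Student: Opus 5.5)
The plan is to prove the statement in two steps. First, I show that jointly measurable PVMs pairwise commute. Second, I check that the ordered products $A^1_{i_1}\cdots A^n_{i_n}$ form a PVM and are the only joint measurement.

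\textbf{Step 1: a positive operator below a projection sits inside it.} Let $\{A_{\mbf{i}}\}$ be any joint measurement. For fixed $j$ and $i$, the definition gives
$$0\le A_{\mbf{i}}\le \sum_{\mbf{i}':\,i'_j=i}A_{\mbf{i}'}=A^j_i$$
whenever $i_j=i$. Suppose $0\le X\le P$ with $P$ a projection. Then
$$(1-P)X(1-P)\le (1-P)P(1-P)=0,$$
so $X^{1/2}(1-P)=0$. Hence $X(1-P)=0$ and, taking adjoints, $(1-P)X=0$, which gives $X=PXP=PX=XP$. Applying this with $X=A_{\mbf{i}}$ and $P=A^j_{i}$, for $i=i_j$, yields
$$A_{\mbf{i}}=A^j_{i_j}A_{\mbf{i}}=A_{\mbf{i}}A^j_{i_j}\quad\text{for every } j.$$

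\textbf{Step 2: multiply out and identify the products.} For $i\ne i_j$, the PVM relations give $A^j_iA_{\mbf{i}}=A^j_iA^j_{i_j}A_{\mbf{i}}=0$. Hence $A^j_iA_{\mbf{i}}=\delta_{i,i_j}A_{\mbf{i}}$. Now expand
$$A^1_{i_1}\cdots A^n_{i_n}=A^1_{i_1}\cdots A^{n-1}_{i_{n-1}}\sum_{\mbf{i}':\,i'_n=i_n}A_{\mbf{i}'}$$
and apply the relation just obtained from right to left, peeling off one factor at a time. Each factor $A^k_{i_k}$ kills every term with $i'_k\neq i_k$ and fixes the others. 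After all factors have acted, only $\mbf{i}'=\mbf{i}$ survives, so
$$A^1_{i_1}\cdots A^n_{i_n}=A_{\mbf{i}}.$$

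\textbf{Conclusion.} The right-hand side does not depend on the ordering of the factors, so the $A^j$'s commute. More directly, $A^j_iA^k_l=A^j_i\sum_{\mbf{i}':\,i'_k=l}A_{\mbf{i}'}$, and by the same argument this sum is symmetric in $(j,k)$. Since the joint measurement is forced to equal the product, it is unique. Products of commuting projections are projections, so each $A_{\mbf{i}}$ is a projection and $\{A_{\mbf{i}}\}$ is a PVM; the orthogonality $A_{\mbf{i}}A_{\mbf{i}'}=0$ for $\mbf{i}\ne\mbf{i}'$ follows from $A^j_{i}A^j_{i'}=0$ in any coordinate where they differ.

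The only step needing care is the operator inequality in Step 1. It is standard, and it holds in any $\mc{B}(H)$, including infinite dimensions.
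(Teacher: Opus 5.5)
Your proof is correct. The absorption lemma ($0\le X\le P$ with $P$ a projection implies $X=PX=XP$), applied to $A_{\mathbf{i}}\le A^j_{i_j}$, gives $A^j_iA_{\mathbf{i}}=\delta_{i,i_j}A_{\mathbf{i}}$. Expanding the product then forces $A_{\mathbf{i}}=A^1_{i_1}\cdots A^n_{i_n}$, pairwise commutation, uniqueness of the joint measurement, and projectivity.

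The paper gives no proof of its own and simply cites \cite{HRS08}; your argument is the standard self-contained proof of that fact. Nothing in it depends on dimension, so it also justifies the paper's remark that the proposition holds independently of dimension, as needed for its later use inside the von Neumann algebras of the commuting-operator sections.
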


Note that the above proposition holds independently of dimension.

\begin{theorem}\label{thm:perfect-q}
    Let $G=(I,J,\{A_i\}_{i\in I},\{B_j\}_{j\in J},\mu,V)$ be a non-degenerate projection game, and let $c(a,b|i,j)=\braket{\phi}{P^i_a\otimes Q^j_b}{\phi}$ be a quantum correlation for $G$ of the form guaranteed by \cref{lem:simplifying-q-corrs}. Suppose $c$ is a perfect strategy for $G$. Define $D=\sum_{n=1}^d\lambda_n\ketbra{n}$. Then, we have that
    \begin{itemize}
        \item $P^i_a$ and $Q^j_b$ are projections
        \item $D$ commutes with $P^i_a$ and $Q^j_b$
        \item $P^i_a (Q^j_b)^T=0$ whenever $p_{i,j}(a)\neq b$.
    \end{itemize}
\end{theorem}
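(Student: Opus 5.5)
The plan is to rewrite the correlation as a trace, and then run the standard Cauchy--Schwarz equality argument separately for each question pair $(i,j)$ with $\mu(i,j)>0$, after coarse-graining Alice's measurement through $p_{i,j}$. Since $\ket{\phi}=\sum_n\lambda_n\ket{n}\ket{n}$ with $\lambda_n>0$ real, a direct computation gives
\begin{align*}
\braket{\phi}{X\otimes Y}{\phi}=\Tr(XDY^TD).
\end{align*}
Here $D=D^T$ is invertible and satisfies $\Tr(D^2)=1$. Write $R^j_b=(Q^j_b)^T$; for each $j$ this is again a POVM. For fixed $(i,j)$ with $\mu(i,j)>0$, set $A^{i,j}_b=\sum_{a:\,p_{i,j}(a)=b}P^i_a$. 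Perfection of $c$ then says exactly that $\sum_b\Tr(A^{i,j}_bDR^j_bD)=1$ for each such pair.

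\textbf{Step 1 (coarse-grained rigidity).} For POVM elements $A,R$, Cauchy--Schwarz in the Hilbert--Schmidt inner product gives
\begin{align*}
\Tr(ADRD)=\ang*{D^{1/2}AD^{1/2},D^{1/2}RD^{1/2}}\le\norm{D^{1/2}AD^{1/2}}_2\norm{D^{1/2}RD^{1/2}}_2 .
\end{align*}
Next, $\norm{D^{1/2}AD^{1/2}}_2^2=\Tr(ADAD)\le\norm{AD}_2^2=\Tr(DA^2D)\le\Tr(AD^2)$. The middle inequality is $|\Tr(XX)|\le\norm{X}_2\norm{X^\ast}_2$ with $X=AD$, and the last uses $A^2\le A$. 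The same bound holds with $R$ in place of $A$. Summing over $b$ and applying Cauchy--Schwarz once more bounds the left-hand side by $1$, so equality must hold at every stage. Equality in $\Tr(D(A-A^2)D)=0$, with $D$ invertible and $A-A^2\ge0$, forces $A^{i,j}_b$ to be a projection. Equality in $|\Tr(XX)|\le\norm{X}_2^2$ forces $X=X^\ast$, that is, $A^{i,j}_bD=DA^{i,j}_b$. The same two conclusions hold for $R^j_b$. Finally, equality in the first Cauchy--Schwarz inequality, together with the equal norms, gives $D^{1/2}A^{i,j}_bD^{1/2}=D^{1/2}R^j_bD^{1/2}$, hence $A^{i,j}_b=R^j_b$. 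Transposing, and using that $D$ is real diagonal, we conclude that every $Q^j_b$ is a projection commuting with $D$. (This assumes every $j$ occurs with some $i$ of positive probability, which one may take as given, since otherwise $Q^j$ is irrelevant.)

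\textbf{Step 2 (lifting to Alice's fine-grained operators).} This is the step I expect to be the main obstacle, since it is where non-degeneracy is needed. Fix $i$ and let $J_i=\{j:\mu(i,j)>0\}$. Because $p_i$ is injective, $\{P^i_a\}_a$, reindexed by $p_i(a)\in\prod_{j\in J_i}B_j$, is a joint measurement of the family of PVMs $\{R^j_b\}_b$ for $j\in J_i$. I will show that any joint POVM $E$ of jointly measurable PVMs equals the product given in \cref{prop:joint-measurable}. First, $E_{\mathbf b}\le R^j_{b_j}$ for each $j$, and a positive operator dominated by a projection lives in its range. By \cref{prop:joint-measurable} the projections $R^j_{b_j}$ commute, so $E_{\mathbf b}\le\prod_jR^j_{b_j}$. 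Summing over $\mathbf b$, both sides sum to $1$, so equality holds termwise. Hence $P^i_a=\prod_{j\in J_i}R^j_{p_{i,j}(a)}$, which is a product of commuting projections commuting with $D$. Therefore $P^i_a$ is a projection and commutes with $D$.

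\textbf{Step 3.} For $j\in J_i$, Step 2 gives $P^i_a=P^i_aR^j_{p_{i,j}(a)}$. Hence for $b\ne p_{i,j}(a)$ we get $P^i_a(Q^j_b)^T=P^i_aR^j_{p_{i,j}(a)}R^j_b=0$, since $R^j$ is a PVM. This yields the third bullet, read for pairs with $\mu(i,j)>0$, which is the only meaningful case.
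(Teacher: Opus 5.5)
Your proof is correct. Steps 2 and 3 essentially match the paper: Step 2 is the paper's appeal to \cref{prop:joint-measurable}, plus a short self-contained proof that a joint POVM of commuting PVMs must be their product. Step 1, however, takes a genuinely different route.

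The paper works with the \emph{losing} outcomes. Perfection forces $(P^i_a\otimes Q^j_b)\ket{\phi}=0$ for $p_{i,j}(a)\neq b$, i.e.\ the operator identity $P^i_aD(Q^j_b)^T=0$. Summing over $a$ and $b$ gives the intertwining relations $P^{i,j}_bD=D(Q^j_b)^T=P^{i,j}_bD(Q^j_b)^T$. Idempotence then follows by cancelling $D$, commutation with $D^2$ by taking adjoints, and commutation with $D$ by functional calculus. The third bullet is read off from $P^i_aD(Q^j_b)^T=0$ once commutation is known.

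You instead bound the \emph{winning} probability of each question pair by $1$. You do this via Cauchy--Schwarz in the $D^{1/2}(\cdot)D^{1/2}$-weighted Hilbert--Schmidt norm together with $A^2\le A$, and extract everything from the equality cases.

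The paper's argument is shorter and purely algebraic, but it relies on exact vanishing. Yours has three advantages:
\begin{itemize}
\item it yields $[A^{i,j}_b,D]=0$ directly from the $X=X^\ast$ equality case, with no square-root step;
\item it gives the explicit identity $A^{i,j}_b=(Q^j_b)^T$;
\item it is exactly the $\varepsilon=0$ case of the Cauchy--Schwarz estimates the paper later uses in the proof of \cref{thm:main}, namely $\sum\mu(i,j)\|[Q^j_b,h_\psi^{1/2}]\|_2^2\le 4\varepsilon$ and $\sum\mu(i,j)\|h_\psi^{1/4}\bar{P}^{i,j}_bh_\psi^{1/4}\|_2^2\ge 1-2\varepsilon$.
\end{itemize}
So your version is the one that degrades gracefully to the approximate setting.

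Your caveats are sound and are implicit assumptions in the paper's proof as well: every $j$ must be asked with positive probability for $Q^j_b$ to be forced to be a projection, and the third bullet is only meaningful when $\mu(i,j)>0$.
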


This theorem was shown in the case of constraint-variable BCS games in~\cite{CM14} and for constraint-variable CS games over a general alphabet  in~\cite{ABdSZ17}.

\begin{proof}
    For $c$ to be perfect, we need that $c(a,b|i,j)=0$ whenever $V(a,b|i,j)=0$. Hence, $\braket{\phi}{P^i_a\otimes Q^j_b}{\phi}=0$ if $p_{i,j}(a)\neq b$. In this case, we have $\sqrt{P^i_a\otimes Q^j_b}\ket{\phi}=0$, which implies $(P^i_a\otimes Q^j_b)\ket{\phi}=0$. Rewriting this as a matrix gives $P^i_a D(Q^j_b)^T=0$. Write $P^{i,j}_b=\sum_{a\in p_{i,j}^{-1}(b)}P^i_a$. By summing over all $b\neq p_{i,j}(a)$, we find that $P^i_a D(1-(Q^j_{b})^T)=0$ if $b=p_{i,j}(a)$. Then, summing over $a$ such that $p_{i,j}(a)=b$, we find that $P^{i,j}_bD(Q^j_b)^T=P^{i,j}_bD$. On the other hand, summing over those $a$ such that $p_{i,j}(a)\neq b$, we can get $(1-P^{i,j}_b)D(Q^j_b)^T$. Together $P^{i,j}_bD(Q^j_b)^T=P^{i,j}_bD=D(Q^j_b)^T$.
    
    Then, $(P^{i,j}_b)^2D=P^{i,j}_bD(Q^j_b)^T=P^{i,j}_bD$, so by invertibility of $D$, $(P^{i,j}_b)^2=P^{i,j}_b$. As such, $P^{i,j}_b$ is a projection. For each $x$, we define a measurement on $\prod_{j\in J:\,\mu(i,j)>0}B_j$ as $P^i_{\mbf{b}}=\sum_{a\in p_{i}^{-1}(\mbf{b})}P^i_{a}$. By construction, this is the joint measurement of the $\{P^{i,j}_b\}$ over $j$. Using \cref{prop:joint-measurable}, $P^i_{\mbf{b}}$ is a projection and $P^i_{\mbf{b}}=\prod_{j}P^{i,j}_{b_j}$. Since $G$ is non-degenerate, $P^i_{\mbf{b}}$ is the sum of at most $1$ term, so we get that $P^i_{a}=\prod_{j}P^{i,j}_{p_{i,j}(a)}$ and is a projection. In the same way, $D((Q^j_b)^2)^T=P^{i,j}_bD(Q^j_b)^T=D(Q^j_b)^T$, so $(Q^y_b)^T$ and hence $Q^y_b$ is a projection as well.

    Next, $P^{i,j}_bD^2=D(Q^j_b)^TD=D(D(Q^j_b)^T)^\ast=D^2P^{i,j}_b$, so $P^{i}_a$ commutes with $D^2$. By the continuous functional calculus, $P^i_a$ commutes with $(D^2)^{1/2}=D$. In the same way, $Q^j_b$ commutes with $D$.

    To finish, note that if $p_{i,j}(a)\neq b$, $0=P^i_a D (Q^j_b)^T=P^i_a(Q^j_b)^TD$. By invertibility of $D$, $P^i_a(Q^j_b)^T=0$.
\end{proof}

Note that, if $G$ were degenerate, we could not rebuild $P^x_a$ from the $P^{x,y}_b$. Hence $P^x_a$ might not be a projection and might not commute with $D$. Nevertheless the other results would hold.

\begin{corollary}
    Let $G$ be a non-degenerate projection game, and let $c$ be a quantum correlation. Suppose that $c$ is a perfect strategy for $G$. Then, $c$ is a tracial correlation.
\end{corollary}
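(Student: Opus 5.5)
We reduce to the form of \cref{lem:simplifying-q-corrs}: write $c(a,b|i,j)=\braket{\phi}{P^i_a\otimes Q^j_b}{\phi}$ with $\ket{\phi}=\sum_n\lambda_n\ket{n}\ket{n}$, all $\lambda_n>0$, and set $D=\sum_n\lambda_n\ketbra{n}$. Since $\ket{\phi}$ is a unit vector, $\Tr(D^2)=\sum_n\lambda_n^2=1$. By \cref{thm:perfect-q}, the $P^i_a$ and $Q^j_b$ are projections commuting with $D$, and $P^i_a(Q^j_b)^T=0$ whenever $p_{i,j}(a)\neq b$.

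For the tracial model we take the von Neumann algebra $\mc{M}=\{D\}'\subseteq\mbb{M}_d$, the commutant of $D$. This is a finite direct sum of full matrix algebras, one for each eigenspace of $D$. On it we define the state $\tau(x)=\Tr(D^2x)$. This state is normal, since we are in finite dimensions, and it is positive with $\tau(1)=1$. It is tracial because for $x,y\in\mc{M}$ we have $\Tr(D^2xy)=\Tr(xD^2y)=\Tr(D^2yx)$, using that $D^2$ commutes with $x$ and the cyclicity of $\Tr$. Alice's operators are the $P^i_a$, which lie in $\mc{M}$. For Bob we take $\tilde Q^j_b=(Q^j_b)^T=\overline{Q^j_b}$. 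These are projections, they form a PVM, and they lie in $\mc{M}$ because $D$ is real diagonal, so transposing $[D,Q^j_b]=0$ gives $[D,(Q^j_b)^T]=0$.

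It remains to check that $c(a,b|i,j)=\tau(P^i_a\tilde Q^j_b)$. We use the standard identity $(X\otimes Y)\ket{\phi}=\mathrm{vec}(XDY^T)$. From it,
\begin{align*}
    \braket{\phi}{X\otimes Y}{\phi}=\Tr\bigl(D X D Y^T\bigr).
\end{align*}
Applying this with $X=P^i_a$ and $Y=Q^j_b$, and using that $P^i_a$ commutes with $D$, we get $c(a,b|i,j)=\Tr(DP^i_aD(Q^j_b)^T)=\Tr(D^2P^i_a(Q^j_b)^T)=\tau(P^i_a\tilde Q^j_b)$. So $c$ is a tracial correlation; in fact it is a finite-dimensional one.

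The only delicate point is getting the transpose and index conventions in the vectorisation identity right. Everything else follows directly from \cref{thm:perfect-q}. The orthogonality relation $P^i_a(Q^j_b)^T=0$ is not needed for representing $c$; it only serves to confirm that the tracial correlation is still perfect.
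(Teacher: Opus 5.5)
Your proof is correct and is essentially the paper's argument: you apply \cref{thm:perfect-q}, rewrite $c(a,b|i,j)=\Tr(D^2P^i_a(Q^j_b)^T)$, and take the tracial state $\tau=\Tr(D^2\,\cdot\,)$ with Alice's PVMs $P^i_a$ and Bob's PVMs $(Q^j_b)^T$. The only difference is cosmetic: you use $\mc{M}=\{D\}'$ instead of the von Neumann algebra generated by the $P^i_a$ and $(Q^j_b)^T$, and $\tau$ is tracial for the same reason in both cases, namely that $D^2$ commutes with $\mc{M}$.
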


This corollary follows for the case of constraint-variable CS games from \cite{ABdSZ17,Pad22,CM14}.

\begin{proof}
    By \cref{thm:perfect-q}, we know that $c(a,b|i,j)=\Tr(P^i_a D(Q^j_b)^T D)$, where $D>0$ is invertible and $\Tr(D^2)=1$, $P^i_a$ and $Q^j_b$ are projectors, $D$ commutes with $P^i_a$ and $Q^j_b$. First note that $c(a,b|i,j)=\Tr(P^i_a(Q^j_b)^T D^2)$. Now, let $\mc{M}\subseteq\mc{B}(H)$ be the von Neumann algebra generated by the $P^i_a$ and $(Q^j_b)^T$. By construction, we have that $D\in\mc{M}'$. Let $\tau$ be the state on $\mc{M}$ defined as $\tau(x)=\Tr(xD^2)$. As $D^2\in\mc{M}'$, $\tau$ is tracial. Let $A^i_a=P^i_a\in\mc{M}$ and $B^j_b=(Q^j_b)^T\in\mc{M}$. We have that $c(a,b|i,j)=\tau(A^i_a B^j_b)$, so it is a tracial correlation.
\end{proof}

\section{Perfect commuting operator case}\label{sec:perfect-co}

Using ideas from \cref{sec:noncom-lp}, it is possible to extend the results of the previous section from quantum correlations to commuting-operator correlations.

\begin{theorem}
    Let $G=(I,J,\{A_i\}_{i\in I},\{B_j\}_{j\in J},\mu,V)$ be a non-degenerate projection game, and let $c(a,b|i,j)=\braket{\psi}{P^i_a Q^j_b}{\psi}$ be a commuting operator correlation that is perfect for $G$. Then, $c$ is a tracial correlation.
\end{theorem}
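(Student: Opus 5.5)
We follow the strategy of \cref{thm:perfect-q}, with the matrix picture $(P\otimes Q^T)\ket{\phi}\leftrightarrow P D Q^T$ replaced by the Haagerup $L^p$ picture used in~\cite{MdlS23}. First we reduce to a standard form. Let $\mc{M}$ be the von Neumann algebra generated by the $P^i_a$, and restrict to the cyclic subspace $\overline{\mc{M}'\ket{\psi}}$ (or cut down by the support projections of the vector state) so that $\ket{\psi}$ is cyclic and separating for $\mc{M}$. The commuting operators $Q^j_b$ may be compressed into $\mc{M}'$ without changing $c$. With $\ket{\psi}$ cyclic and separating, $H$ is identified with $L^2(\mc{M})$ and $\ket{\psi}$ with $h^{1/2}$, where $h\in L^1(\mc{M})_+$ is the density of the vector state. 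Elements of $\mc{M}$ act by left multiplication and elements of $\mc{M}'\cong\mc{M}^{op}$ act by right multiplication. Each $Q^j_b$ thus corresponds to a positive $\tilde Q^j_b\in\mc{M}$ with $Q^j_b\xi=\xi\tilde Q^j_b$, and $\{\tilde Q^j_b\}_b$ is a POVM. This gives $c(a,b|i,j)=\tr(h^{1/2}P^i_a h^{1/2}\tilde Q^j_b)$, and $h^{1/2}$ has full support, playing the role of $D$.

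Next we repeat the algebra of \cref{thm:perfect-q}. Perfection gives $\norm{(P^i_a)^{1/2}h^{1/2}(\tilde Q^j_b)^{1/2}}_2=0$, hence $P^i_a h^{1/2}\tilde Q^j_b=0$ whenever $p_{i,j}(a)\neq b$. Summing exactly as before with $P^{i,j}_b=\sum_{a\in p_{i,j}^{-1}(b)}P^i_a$ yields
\[P^{i,j}_bh^{1/2}\tilde Q^j_b=P^{i,j}_bh^{1/2}=h^{1/2}\tilde Q^j_b.\]
Then $(P^{i,j}_b)^2h^{1/2}=P^{i,j}_bh^{1/2}$. Because $h^{1/2}$ is separating, in the sense that $xh^{1/2}=0$ in $L^2(\mc{M})$ implies $x=0$ for $x\in\mc{M}$, each $P^{i,j}_b$ is a projection, and similarly each $\tilde Q^j_b$ is a projection. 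By \cref{prop:joint-measurable}, which is dimension free, together with non-degeneracy, $P^i_a=\prod_jP^{i,j}_{p_{i,j}(a)}$ is a projection. Taking adjoints, $P^{i,j}_bh=h^{1/2}\tilde Q^j_bh^{1/2}=hP^{i,j}_b$ as elements of $L^1(\mc{M})$, so $P^{i,j}_b$, and hence $P^i_a$, commutes with $h$. The same holds for $\tilde Q^j_b$. Finally $P^i_a\tilde Q^j_b h^{1/2}=P^i_ah^{1/2}\tilde Q^j_b=0$ when $p_{i,j}(a)\ne b$, so $P^i_a\tilde Q^j_b=0$ in that case.

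The main obstacle is turning ``commutes with $h$'' into a tracial state. In finite dimensions this came from $D$ lying in the commutant. Here we instead let $\mc{N}\subseteq\mc{M}$ be the von Neumann algebra generated by the $P^i_a$ and $\tilde Q^j_b$. Every element $x\in\mc{N}$ satisfies $xh=hx$, and hence $\sigma^\varphi_t(x)=h^{it}xh^{-it}=x$; that is, $\mc{N}$ lies in the centralizer of the faithful normal state $\varphi=\tr(h\,\cdot)$. On the centralizer, $\varphi$ is tracial by the KMS condition, or directly via $\tr(hxy)=\tr(yhx)=\tr(hyx)$. Justifying these manipulations with the unbounded $h$ needs care, using the $\theta$-equivariance and the identification of the centralizer with the commutant of $h$ as in~\cite{MdlS23}. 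Setting $\tau=\varphi|_{\mc{N}}$, $A^i_a=P^i_a$ and $B^j_b=\tilde Q^j_b$, we get $c(a,b|i,j)=\tr(h^{1/2}P^i_ah^{1/2}\tilde Q^j_b)=\tr(hP^i_a\tilde Q^j_b)=\tau(A^i_aB^j_b)$, which is a tracial correlation.
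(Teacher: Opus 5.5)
Your proposal is correct and follows essentially the same route as the paper. You pass to the Haagerup $L^p$ picture, use perfection to get $P^i_a h^{1/2}\tilde Q^j_b=0$, and deduce that the marginals are projections. Via joint measurability and non-degeneracy, Alice's full measurements are then projections, all commuting with $h$. Restricting $\tr(h\,\cdot)$ gives the trace. The differences are only in setup: you take $\mc{M}$ to be Alice's algebra and reduce to a cyclic separating vector, so that $h$ is faithful and Bob acts by right multiplication. The paper instead takes $\mc{M}$ generated by Bob's operators, imports Alice's operators via the argument of~\cite{MdlS23}, and cuts down by the support projection of $h_\psi$.
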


\begin{proof}
    Let $\mc{M}\subseteq\mc{B}(H)$ be the von Neumann algebra generated by the $Q^j_b$. Then $P^i_a\in\mc{M}'$. The map $\mc{M}\rightarrow\C$, $x\mapsto\braket{\psi}{x}{\psi}$ is an element of $\mc{M}_\ast$, so there exists a corresponding element $h_\psi\in L^1(\mc{M})$. Note that $h_\psi$ is positive and $\tr(h_\psi)=\psi(1)=1$. By a standard argument, there exist POVMs $\{\bar{P}^i_a\}_{a\in A_i}\subseteq\mc{M}$ such that $\braket{\psi}{P^i_a Q^j_b}{\psi}=\tr(\bar{P}^i_ah_\psi^{1/2}Q^j_bh_\psi^{1/2})$ (see the proof of Theorem 2.1 in \cite{MdlS23}). As $c$ is perfect, we know that $c(a,b|i,j)=0$ whenever $p_{i,j}(a)=b$. By tracial property of $\tr$, we have in this case that
    \begin{align*}
        0=\tr((\bar{P}^i_a)^{1/2}h_\psi^{1/2}Q^j_bh_\psi^{1/2}(\bar{P}^i_a)^{1/2})=\norm{(\bar{P}^i_a)^{1/2}h_\psi^{1/2}(Q^j_b)^{1/2}}_2^2,
    \end{align*}
    and hence $(\bar{P}^i_a)^{1/2}h_\psi^{1/2}(Q^j_b)^{1/2}=0$ (in terms of the strong product). Therefore, $\bar{P}^i_ah_\psi^{1/2}Q^j_b=0$. Let $\bar{P}^{i,j}_b=\sum_{a\in p_{i,j}^{-1}(b)}\bar{P}^i_a$. Fixing any $b\neq b'\in B_j$, we have $\bar{P}^{i,j}_{b}h_\psi^{1/2}Q^j_{b'}=\sum_{a\in p_{i,j}^{-1}(b')}\bar{P}^i_ah_\psi^{1/2}Q^j_b=0$ as $p_{i,j}(a)=b'\neq b$. Now, $\bar{P}^{i,j}_bh_\psi^{1/2}=\sum_{b'\in B_j}\bar{P}^{i,j}_bh_\psi^{1/2}Q^j_{b'}=\bar{P}^{i,j}_bh_\psi^{1/2}Q^j_b$ and $h_\psi^{1/2}Q^x_b=\sum_{b'\in B_j}\bar{P}^{i,j}_{b'}h_\psi^{1/2}Q^j_b=\bar{P}^{i,j}_bh_\psi^{1/2}Q^j_b$. Now, in the same way as in the finite-dimensional case, $(\bar{P}^{i,j}_b)^2h_\psi^{1/2}=\bar{P}^{i,j}_bh_\psi^{1/2}Q^j_b=\bar{P}^{i,j}_bh_\psi^{1/2}$; $h_\psi^{1/2}(Q^j_b)^2=\bar{P}^{i,j}_bh_\psi^{1/2}Q^j_b=h_\psi^{1/2}Q^j_b$; and $Q^j_bh_\psi=(h_\psi^{1/2}Q^j_b)^\ast h_\psi^{1/2}=h_\psi^{1/2}\bar{P}^{i,j}_bh_\psi^{1/2}=h_\psi Q^j_b$. As the $Q^j_b$ generate $\mc{M}$, $h_\psi$ is affiliated with $\mc{M}'$. Let $p$ be the projection onto the support of $h_\psi$. Then, $p\mc{M}\subseteq \mc{B}(p c(H))$ is a von Neumann algebra, and $p^{i,j}_b=\bar{P}^{i,j}_bp$ and $q^j_b=Q^j_bp$ are projections in $p\mc{M}$. Since the $p^{i,j}_b$ are jointly measurable and $G$ is non-degenerate, we have that $p^i_a=\prod_{j\in J}p^{i,j}_{p_{i,j}(a)}$ is a PVM as in \cref{thm:perfect-q}; note that $p^i_a=\bar{P}^i_a p$. Now, let $\rho:p\mc{M}\rightarrow\C$ be defined as $\rho(x)=\tr(xh_\psi)$. $\rho$ is a tracial state: as $\rho(1_{p\mc{M}})=\rho(p)=\tr(h_\psi)=1$, for $x\geq 0$, $\rho(x)=\tr(h_\psi^{1/2}xh_\psi^{1/2})\geq0$, and $\rho(xy)=\tr(xyh_{\psi})=\tr(xh_{\psi}y)=\tr(yxh_{\psi})=\rho(yx)$. Putting it together,
    \begin{align*}
        c(a,b|i,j)=\tr(p^i_a h_\psi^{1/2} q^j_b h_\psi^{1/2})=\tr(p^i_a q^j_b h_\psi)=\rho(p^i_a q^j_b),
    \end{align*}
    which is a tracial correlation.
\end{proof}

\begin{corollary}
    In the notation of the previous proof, we have for a perfect commuting operator strategy that $p^i_a q^j_b=0$ whenever $p_{i,j}(a)\neq b$.
\end{corollary}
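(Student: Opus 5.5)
The plan is to derive the orthogonality from the identity $\bar{P}^i_a h_\psi^{1/2} Q^j_b = 0$ established in the preceding proof for all $a,b$ with $p_{i,j}(a)\neq b$. Two things will let us move $h_\psi^{1/2}$ out of the way: $h_\psi$ commutes with everything in $\mc{M}$, and $p$ is a support projection on which $h_\psi$ is injective. We then cancel $h_\psi^{1/2}$ from the identity, exactly as invertibility of $D$ was used at the end of \cref{thm:perfect-q}.

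First, we check that $h_\psi^{1/2}$ commutes with the relevant elements of $\mc{M}$. The proof shows $Q^j_b h_\psi = h_\psi Q^j_b$, and hence $h_\psi$ is affiliated with $\mc{M}'$. By functional calculus, $h_\psi^{1/2}$ and the spectral projection $p$ are then also affiliated with $\mc{M}'$. So $p$ commutes with $\bar{P}^i_a$ and $Q^j_b$, and $h_\psi^{1/2}$ strongly commutes with $Q^j_b$.

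Second, we rewrite the identity as
\[
0=\bar{P}^i_a h_\psi^{1/2}Q^j_b=\bar{P}^i_a Q^j_b h_\psi^{1/2}=\bar{P}^i_a p\, Q^j_b p\, h_\psi^{1/2}=p^i_a q^j_b h_\psi^{1/2},
\]
using $p h_\psi^{1/2}=h_\psi^{1/2}$ and that $p$ commutes with both operators.

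Third, we cancel $h_\psi^{1/2}$. On $p\,c(H)$ the operator $h_\psi^{1/2}$ is injective with dense range, since $p$ is its support projection and $h_\psi^{1/2}$ is self-adjoint. The product $x:=p^i_a q^j_b$ is a bounded operator supported in $p\,c(H)$, and $x h_\psi^{1/2}=0$ on a dense subspace of $p\,c(H)$. Hence $x p=0$, and therefore $x=0$. This is the analogue of invoking invertibility of $D$.

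The main obstacle is making this rigorous for $\tau$-measurable unbounded operators with strong products. We must check that $x h_\psi^{1/2}=0$ as an affiliated operator really implies vanishing on $\operatorname{dom}(h_\psi^{1/2})$. This holds because the strong product is the closure of the ordinary product, which is defined on the domain of $h_\psi^{1/2}$, and a closed operator equal to $0$ vanishes on that domain. We must also check that this domain, intersected with the range of $p$, is dense there.

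A cleaner alternative would avoid domain issues altogether. Take adjoints to get $0=q^j_b p^i_a$ tested against the tracial state: compute $\rho\bigl((p^i_aq^j_b)^\ast p^i_aq^j_b\bigr)=\tr\bigl(h_\psi^{1/2}q^j_b p^i_a q^j_b h_\psi^{1/2}\bigr)$. This vanishes by the second step. Faithfulness of $\rho$ on $p\mc{M}$, which holds because $p$ is the support of $h_\psi$ and $\tr$ is faithful, then gives $p^i_a q^j_b=0$. I would present this second route as the main argument.
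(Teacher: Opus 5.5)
Your proposal is correct and follows the argument the paper leaves implicit. You combine $\bar{P}^i_a h_\psi^{1/2} Q^j_b=0$ with the fact that $h_\psi$ (hence $h_\psi^{1/2}$ and its support $p$) commutes with $\mathcal{M}$ to get $p^i_a q^j_b h_\psi^{1/2}=0$, and then cancel $h_\psi^{1/2}$ on $p\,c(H)$, just as invertibility of $D$ is used at the end of \cref{thm:perfect-q} (equivalently, use $\rho(p^i_a q^j_b)=c(a,b|i,j)=0$ and faithfulness of $\rho$ on $p\mathcal{M}$). The point to tidy is that you need density of the \emph{range} of $h_\psi^{1/2}$ in $p\,c(H)$, not of its domain; your ``cleaner'' faithfulness route does not avoid this, since faithfulness of $\rho$ on $p\mathcal{M}$ is exactly that density statement (or the standard fact that a normal state is faithful on the corner cut down by its support).
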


\section{Approximate case}\label{sec:approx-co}

In this section, we prove the following theorem, extending to the most general case of almost perfect commuting operator strategies.

\begin{theorem}\label{thm:main}
    Let $G=(I,J,\{A_i\}_{i\in I},\{B_j\}_{j\in J},\mu,V)$ be a non-degenerate projection game, and let $c(a,b|i,j)=\braket{\phi}{P^i_aQ^j_b}{\phi}$ be a quantum correlation for $G$. Suppose that $\mfk{w}_{G}(c)\geq 1-\varepsilon$. Then, there exists a tracial correlation $c'$ such that $\mfk{w}_{G}(c')\geq 1-148(L\varepsilon)^{1/4}$, where $L=\max_{i,j:\pi(i,j)\neq 0}\frac{\pi(i)}{\pi(i,j)}$ for $\pi(i)=\sum_{j\in J}\pi(i,j)$.
\end{theorem}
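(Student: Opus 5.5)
We follow the strategy of \cite{MdlS23}, adapted to projection games along the lines of \cref{thm:perfect-q}. The proof is a robust version of the perfect commuting-operator case, run in Haagerup's $L^p$ spaces.

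\textbf{Step 1: pass to $L^p$.} Let $\mc{M}$ be the von Neumann algebra generated by the $Q^j_b$, and let $h=h_\phi\in L^1(\mc{M})_+$ be the density of the vector state. As in the proof of the perfect commuting-operator case, we obtain POVMs $\bar P^i_a\in\mc{M}$ such that
\[
c(a,b|i,j)=\tr\bigl(\bar P^i_a h^{1/2}Q^j_bh^{1/2}\bigr).
\]
Set $\bar P^{i,j}_b=\sum_{a\in p_{i,j}^{-1}(b)}\bar P^i_a$. Then
\[
\varepsilon\geq \expec_{(i,j)}\sum_{b\neq b'}\tr\bigl(\bar P^{i,j}_{b'}h^{1/2}Q^j_bh^{1/2}\bigr).
\]
Copying the algebra of \cref{thm:perfect-q} with $\|\cdot\|_2$-errors gives the averaged estimates
\[
\expec_{(i,j)}\sum_b\bigl\|\bar P^{i,j}_bh^{1/2}-h^{1/2}Q^j_b\bigr\|_2^2=O(\varepsilon),
\]
which says that $h^{1/2}$ approximately intertwines the two measurements. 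From this it follows that $\bar P^{i,j}_b$ and $Q^j_b$ are $O(\sqrt\varepsilon)$-close to projective, in the sense $\sum_b\tr(h(Q^j_b-(Q^j_b)^2))=O(\sqrt{\varepsilon})$. It also follows that $h$ almost commutes with each $Q^j_b$ on average, in the sense that $\|[h^{1/2},Q^j_b]\|_2$ is small.

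\textbf{Step 2: recover Alice's full measurement.} This step is the main obstacle. The estimates above hold on average over $(i,j)\leftarrow\mu$. To control the joint measurement $\bar P^i_{\mbf b}$ over all $j$ with $\mu(i,j)>0$, we will convert them into estimates holding for each fixed $i$ and every such $j$. This conversion costs the factor $L$: any quantity averaged over $j\leftarrow\mu(\cdot|i)$ bounds each individual $j$ up to a factor $\pi(i)/\pi(i,j)\le L$. We then need a quantitative version of \cref{prop:joint-measurable}: if jointly measurable POVMs are each $\delta$-close to projective (in the $h$-weighted $2$-norm), then the joint measurement is $O(L\delta)$-close to projective. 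The plan is to compare $\bar P^i_{\mbf b}$ with its summands. We bound
\[
\sum_{\mbf b}\tr\bigl(h(\bar P^i_{\mbf b}-(\bar P^i_{\mbf b})^2)\bigr)\le\sum_{j}\sum_b\tr\bigl(h(\bar P^{i,j}_b-(\bar P^{i,j}_b)^2)\bigr),
\]
using the inequality $\bar P^i_{\mbf b}\le\bar P^{i,j}_{b_j}$. Non-degeneracy then identifies $\bar P^i_a$ with $\bar P^i_{p_i(a)}$. The sum over $j$ is exactly what introduces $L$ in place of an average.

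\textbf{Step 3: round and trace.} Once all of Alice's $\bar P^i_a$ and Bob's $Q^j_b$ are near-projective and nearly commute with $h$ in the weighted $2$-norm, we apply the rounding/Connes-type lemma of \cite{MdlS23} (their generalised Connes lemma). This lemma produces a tracial von Neumann algebra $(\mc{N},\rho)$ together with PVMs $A^i_a,B^j_b$ whose correlations are close to $c$. The errors of the form $O(\sqrt{\text{error}})$ propagate from the level-$L\sqrt\varepsilon$ estimates to $O((L\varepsilon)^{1/4})$. Tracking the constants gives $148$. In the finite-dimensional case, $h$ is the matrix $D^2$ and the Connes lemma is the classical one, so the resulting algebra stays finite-dimensional.
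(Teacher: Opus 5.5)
Your outline has the same architecture as the paper's proof: pass to the density $h\in L^1(\mathcal{M})$, pay a factor $L$ in a union bound over Bob's questions, and finish with the Marrakchi--de la Salle lemma. The gap is in Step~2, which measures near-projectivity by $\tr\bigl(h(A-A^2)\bigr)$; for that quantity your union bound is false. Since $\sum_{\mathbf b}P_{\mathbf b}=1=\sum_bP^j_b$, your inequality is equivalent to
\[
\sum_{\mathbf b\neq\mathbf b'}\tr(hP_{\mathbf b}P_{\mathbf b'})\le\sum_j\sum_{\mathbf b,\mathbf b':\,b_j\neq b'_j}\tr(hP_{\mathbf b}P_{\mathbf b'}).
\]
This needs the cross terms to be nonnegative. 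But $\tr(h\,\cdot\,)$ is not tracial, so $\tr(hP_{\mathbf b}P_{\mathbf b'})$ can be negative, and the operator inequality $P_{\mathbf b}\le P^j_{b_j}$ does not repair this. For example, take two binary coordinates on $\C^2$ with $P_{00}=\tfrac12|0\rangle\langle0|$, $P_{11}=\tfrac12|+\rangle\langle+|$, $P_{10}=0$ and $P_{01}=1-P_{00}-P_{11}$. Then the right side minus the left side equals $\tr\bigl(h(P_{00}P_{11}+P_{11}P_{00})\bigr)$, which is negative when $h$ is (close to) the projection onto $|0\rangle-|+\rangle$.

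The missing idea is to work with the KMS form. The paper tracks $\sum\|h^{1/4}A_ah^{1/4}\|_2^2=\sum\tr(A_ah^{1/2}A_ah^{1/2})$. In the expansion $1=\sum_{\mathbf b,\mathbf b'}\tr(P_{\mathbf b}h^{1/2}P_{\mathbf b'}h^{1/2})$ every cross term is $\geq0$, so the union bound is valid (\cref{lem:approx-jointly-meas}). The input $\sum\mu(i,j)\|h^{1/4}\bar P^{i,j}_bh^{1/4}\|_2^2\geq1-2\varepsilon$ comes from Cauchy--Schwarz, writing the winning probability as $\sum\mu(i,j)\langle h^{1/4}\bar P^{i,j}_bh^{1/4},h^{1/4}Q^j_bh^{1/4}\rangle$.

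Even if Step~2 held, it would not give what Step~3 needs. The Marrakchi--de la Salle lemma (\cref{lem:mdls}) requires PVMs for \emph{both} players that almost commute with $h^{1/2}$. A small defect $\tr(h(A-A^2))$ says nothing about commutation with $h$: a PVM has zero defect yet can be far from commuting with $h$. So you never establish almost-commutation for Alice's full operators $\bar P^i_a$. The KMS quantity handles both requirements:
\[
\sum_a\|[A_a,h^{1/2}]\|_2^2=2\sum_a\bigl(\tr(A_a^2h)-\|h^{1/4}A_ah^{1/4}\|_2^2\bigr)\le2\bigl(1-\sum_a\|h^{1/4}A_ah^{1/4}\|_2^2\bigr).
\]
Also $\|h^{1/4}A_ah^{1/4}\|_2^2\le\tr(A_a^2h)$ by Cauchy--Schwarz, which is the hypothesis of \cref{thm:dls-orth}. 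That theorem is used to round to PVMs before applying \cref{lem:mdls} to the symmetrised distribution on $I\sqcup J$.

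Your exponent bookkeeping also does not close. Step~1 in fact gives defects $O(\varepsilon)$, not $O(\sqrt\varepsilon)$, and this matters because \cref{lem:mdls} loses a fourth root. Inputs of size $O(L\varepsilon)$ yield $(L\varepsilon)^{1/4}$, whereas inputs of size $L\sqrt\varepsilon$ would yield only $L^{1/4}\varepsilon^{1/8}$.
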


Our main technical tool is the following lemma of \cite{MdlS23}, which they use to show the analogous result for synchronous games.

\begin{lemma}[\cite{MdlS23}]\label{lem:mdls}
    Let $\mc{M}\subseteq\mc{B}(H)$ be a von Neumann algebra, $h\in L^1(\mc{M})$ be positive such that $\tr(h)=1$, and $q=\chi_{(1,\infty)}(h)$. Let $X$ and $A$ be finite sets, and let $\pi$ be a symmetric probability distribution on $X\times X$ with marginal $\nu$ on $X$. Suppose that $\{p^x_a\}_{a\in A}$ are PVMs such that $\expec_{x\leftarrow\nu}\sum_a\norm{[p^x_a,h^{1/2}]}_2^2\leq\delta$. Then, there exist PVMs $\{r^x_a\}_{a\in A}\subseteq q c(\mc{M})q$ such that
    \begin{align*}
        \expec_{(x,y)\leftarrow\pi}\sum_{a,b\in A}\abs*{\tr(p^x_ah^{1/2}p^y_bh^{1/2})-\tau(r^x_ar^y_b)}\leq 38\delta^{1/4}.
    \end{align*}
\end{lemma}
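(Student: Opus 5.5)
The lemma is quoted from \cite{MdlS23}, so I would reconstruct their Connes-type argument in the core $c(\mc{M})$ with its trace $\tau$. Write $e_t=\chi_{(t,\infty)}(h)$ for $t>0$, so that $q=e_1$. Since $h$ is an $e^{-s}$-eigenvector of $\theta_s$ and $\tau\circ\theta_s=e^{-s}\tau$, all the projections $e_t$ are conjugate under $\theta$ to $q$. Concretely, $\theta_s(e_t)=e_{te^{s}}$, with the trace rescaling accordingly, and $\tau(e_t)=\tr(h)/t=1/t$.

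The first step is a layer-cake identity. For $x,y\in\mc{M}$ I aim to show
\begin{align*}
\tr(xh)=\int_0^\infty \tau(e_t x e_t)\,dt.
\end{align*}
For $x$ and $y$ that nearly commute with $h^{1/2}$, I would also compare $\tr(xh^{1/2}yh^{1/2})$ with $\int_0^\infty\tau(e_txe_tye_t)\,dt$. Since the $p^x_a$ lie in $\mc{M}$, they are $\theta$-invariant. Using the scaling above, each integrand $\tau(e_t\,\cdot\,e_t)$ for $x,y\in\mc{M}$ equals $t^{-1}\tau(q\,\cdot\,q)$ up to the automorphism, with $\cdot$ fixed by $\theta$. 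So the integral collapses to a single normalised trace on $qc(\mc{M})q$, where $\tau(q)=1$.

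The second step converts the hypothesis into commutation with the spectral projections, using the Powers--St{\o}rmer / Connes inequality. For the unitary $u=2p^x_a-1$, I would use
\begin{align*}
\int_0^\infty\norm{\chi_{(t,\infty)}(h)-\chi_{(t,\infty)}(uhu^\ast)}_{2,\tau}^2\,dt\le\norm{h-uhu^\ast}_1\le 2\norm{h^{1/2}-uh^{1/2}u^\ast}_2\norm{h^{1/2}}_2,
\end{align*}
and note that $\norm{h^{1/2}-uh^{1/2}u^\ast}_2=2\norm{[p^x_a,h^{1/2}]}_2$. Summing over $a$ and averaging over $x\leftarrow\nu$, Cauchy--Schwarz (using $|A|$-independent bounds via $\sum_a$) gives $\expec_x\sum_a\norm{[p^x_a,q]}_{2,\tau}^2\lesssim\delta^{1/2}$ after the rescaling reduction to $t=1$. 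The same estimate shows
\begin{align*}
\expec_{(x,y)\leftarrow\pi}\sum_{a,b}\abs*{\tr(p^x_ah^{1/2}p^y_bh^{1/2})-\tau(qp^x_aqp^y_bq)}\lesssim\delta^{1/2}.
\end{align*}
Here the symmetry of $\pi$ is used so that both marginals are $\nu$, which lets the commutator bound be applied on both factors.

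The third step, which I expect to be the main obstacle, is rounding. The compressions $qp^x_aq$ form a POVM in $qc(\mc{M})q$ that is only approximately projective, with $\expec_x\sum_a\norm{(qp^x_aq)^2-qp^x_aq}_{1,\tau}\lesssim\delta^{1/2}$. I would round it to a genuine PVM $\{r^x_a\}$ with
\begin{align*}
\expec_x\sum_a\norm{r^x_a-qp^x_aq}_{2,\tau}^2\lesssim\delta^{1/2}.
\end{align*}
My first attempt would be the standard orthogonalisation lemma for almost-projective POVMs in a finite von Neumann algebra, which loses a square root; this is why the exponent drops to $\delta^{1/4}$. The delicate points are keeping the constant independent of $|A|$ and summing the errors correctly over pairs. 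I would handle both by writing
\begin{align*}
\tau(r^x_ar^y_b)-\tau(qp^x_aqp^y_bq)=\tau\bigl((r^x_a-qp^x_aq)r^y_b\bigr)+\tau\bigl(qp^x_aq(r^y_b-qp^y_bq)\bigr)
\end{align*}
and applying Cauchy--Schwarz with $\sum_b r^y_b=q$. Combining the three steps and tracking constants should give the stated bound $38\delta^{1/4}$.
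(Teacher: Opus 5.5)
Your overall plan is the right shape: move almost-commutation from $h^{1/2}$ to $q$ with a Connes-type inequality, compress to $qc(\mc{M})q$ where $\tau$ is a tracial state, orthogonalise with Theorem~\ref{thm:dls-orth}, and finish with Cauchy--Schwarz. However, the central analytic step is formulated in a way that is false in the core, and the ingredient that actually carries the proof is missing. You correctly note $e_t=\theta_{\log t}(q)$ and $\tau(e_t)=1/t$. It follows that $\tau(e_txe_t)=t^{-1}\tau(qxq)$ for every $x\in\mc{M}$, so $\int_0^\infty\tau(e_txe_t)\,dt=\tau(qxq)\int_0^\infty t^{-1}\,dt$ diverges. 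Already for $x=1$ the left side of your layer-cake identity is $\tr(h)=1$ and the right side is $+\infty$. Put differently, $h$ is not $\tau$-integrable in $c(\mc{M})$, so the semifinite Connes inequality cannot be applied there. Since $\chi_{(t,\infty)}(uhu^\ast)=ue_tu^\ast$, the left side of your displayed inequality equals $\|q-uqu^\ast\|_{2,\tau}^2\int_0^\infty t^{-1}\,dt$. The inequality would therefore force $q\in\mc{M}'$, which fails in Example~\ref{ex:type-i} for any non-scalar density matrix. No ``rescaling to $t=1$'' repairs this. In the core, the layer-cake integral is already built into $\tau$: in Example~\ref{ex:type-i}, $\tau(qxqy)=\int_0^\infty\Tr(f_sxf_sy)\,ds$ for $x,y\in\mc{M}$, where $f_s=\chi_{(s,\infty)}(h(0))$. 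What you need are two single-projection statements:
(a) $\tau(qxq)=\tr(xh)$ for $x\in\mc{M}$;
(b) a core version of Connes' inequality, such as $\|\chi_{(1,\infty)}(h)-\chi_{(1,\infty)}(k)\|_{2,\tau}^2\le\|h^{1/2}-k^{1/2}\|_2\|h^{1/2}+k^{1/2}\|_2$ for positive $h,k\in L^1(\mc{M})$, applied with $k=uhu^\ast$.
Statement (b) is precisely the generalised Connes lemma of \cite{MdlS23} referred to in the introduction. Your proposal neither states it correctly nor indicates how to prove it.

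Your second step is also not $|A|$-independent as written. Using $u=2p^x_a-1$ for each $a$ separately gives $\sum_a\|[p^x_a,q]\|_{2,\tau}^2\le\sum_a\|[p^x_a,h^{1/2}]\|_2$. Cauchy--Schwarz bounds this only by $(|A|\sum_a\|[p^x_a,h^{1/2}]\|_2^2)^{1/2}$, and the loss is genuine for that route (many outcomes of small weight). Instead, apply (b) to $u_\omega=\sum_a\omega_ap^x_a$ with independent uniformly random phases $\omega_a$. The average of $\|q-u_\omega qu_\omega^\ast\|_{2,\tau}^2$ is exactly $\sum_a\|[p^x_a,q]\|_{2,\tau}^2$, and the average of $\|h^{1/2}-u_\omega h^{1/2}u_\omega^\ast\|_2^2$ is $\sum_a\|[p^x_a,h^{1/2}]\|_2^2$. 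Jensen then gives $\expec_{x\leftarrow\nu}\sum_a\|[p^x_a,q]\|_{2,\tau}^2\le2\delta^{1/2}$. Your last step needs the same care: Cauchy--Schwarz must pair factors whose squares sum to at most $1$ over $(a,b)$, e.g.\ $\|r^x_ar^y_b\|_{2,\tau}$ or $\|(qp^x_aq)^{1/2}r^y_b\|_{2,\tau}$, not $\|r^y_b\|_{2,\tau}$. Your exponent bookkeeping is also off. Theorem~\ref{thm:dls-orth} is linear in the squared $L^2$ error; the drop from $\delta^{1/2}$ to $\delta^{1/4}$ comes from Cauchy--Schwarz when converting squared $L^2$ errors into $\ell^1$ errors on correlations. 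For the same reason, the comparison of $\tr(p^x_ah^{1/2}p^y_bh^{1/2})$ with $\tau(qp^x_aqp^y_bq)$ does not follow from ``the same estimate'' at order $\delta^{1/2}$. Going through $\tr(p^x_ap^y_bp^x_ah)=\tau(qp^x_ap^y_bp^x_aq)$ gives $O(\delta^{1/4})$, which is all you need.
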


Note that $\tau|_{qc(\mc{M})q}$ is a tracial state as $\tau(q)=\tr(h)=1$.

We also need the following theorem to allow us to round POVMs to PVMs

\begin{theorem}[\cite{dlS22}]\label{thm:dls-orth}
    Let $\rho$ be a normal state on a von Neumann algebra $\mc{M}$. Suppose that $\{a_i\}\subseteq\mc{M}$ is a POVM such that $\sum_i\rho(a_i^2)\geq 1-\varepsilon$. Then, there exists a PVM $\{p_i\}\subseteq\mc{M}$ such that $\sum_i\rho(|a_i-p_i|^2)\leq9\varepsilon$.
\end{theorem}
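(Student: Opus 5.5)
The plan is to round in two stages. First we turn each $a_i$ into a projection by spectral truncation. Then we repair the failure of orthogonality and completeness by working with the state $\rho$ and its $L^2$-seminorm $\norm{x}_\rho=\rho(x^\ast x)^{1/2}$ on $\mc{M}$. Throughout, the hypothesis is used in the form
\[
\sum_i\rho(a_i-a_i^2)=1-\sum_i\rho(a_i^2)\leq\varepsilon ,
\]
which holds since $\sum_i a_i=1$. Each $a_i-a_i^2\geq 0$, so this is a sum of nonnegative defects.

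\textbf{Stage 1 (truncation).} Set $q_i=\chi_{[1/2,1]}(a_i)\in\mc{M}$. For $x\in[0,1]$ we have $|x-\chi_{[1/2,1]}(x)|^2\leq x(1-x)$: if $x<1/2$ the left side is $x^2\leq x(1-x)$, and if $x\geq 1/2$ it is $(1-x)^2\leq x(1-x)$. By functional calculus, $\rho(|a_i-q_i|^2)\leq\rho(a_i-a_i^2)$, so $\sum_i\norm{a_i-q_i}_\rho^2\leq\varepsilon$. Consequently $s=\sum_i q_i$ is close to $1$. By Cauchy--Schwarz this holds in $\norm{\cdot}_\rho$ only up to a factor of the number of outcomes, which we must avoid. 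So I would instead control the overlaps directly. Since $a_i\leq 1-a_j$ on the range of $q_j$ for $j\neq i$, the cross terms $\sum_{i\neq j}\rho(q_jq_iq_j)$ should be bounded by a constant times $\sum_i\rho(a_i-a_i^2)$, using $q_j\leq 2a_j$ and $a_j\leq 1-\sum_{i\neq j}a_i$.

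\textbf{Stage 2 (orthogonalisation).} The $q_i$ are projections that are almost orthogonal and almost sum to $1$ in the $\rho$-sense. I would dilate to $\mc{M}\otimes\mbb{M}_n$ and consider the row $R=\sum_i e_{1i}\otimes q_i$, with $RR^\ast=e_{11}\otimes s$. The exact PVM we are after corresponds to an isometric column. I would take the polar decomposition of the column $C=\sum_i e_{i1}\otimes q_i$, so $C^\ast C=e_{11}\otimes s$, and set $C|C|^{-1}$ on the support of $s$. Its entries $v_i=q_is^{-1/2}$ satisfy $\sum_i v_i^\ast v_i=\mathrm{supp}(s)$. These are not projections, so I would combine this with a sequential construction: $p_1=q_1$, $p_k=$ the projection onto $\overline{\mathrm{ran}}\big((1-p_1-\cdots-p_{k-1})q_k\big)$, and finally all leftover mass $1-\sum_k p_k$ is added to one outcome. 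Each $p_k\in\mc{M}$ by construction, and the result is a PVM.

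\textbf{Stage 3 (estimates).} The error $\sum_k\norm{q_k-p_k}_\rho^2$ is to be bounded by the overlap terms $\sum_{j<k}\rho(q_kq_jq_k)$ from Stage 1, and the leftover by $\rho(1-s)$-type terms. The triangle inequality together with $(x+y)^2\leq 2x^2+2y^2$ then gives $\sum_i\rho(|a_i-p_i|^2)\leq C\varepsilon$.

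\textbf{Main obstacle.} The hard step is Stage 3: obtaining a constant independent of the number of outcomes and of the ordering, with $\rho$ not tracial (so $\rho(xx^\ast)\neq\rho(x^\ast x)$ in general). In particular, the leftover mass must be assigned without summing square roots over $i$. If the naive Gram--Schmidt bound picks up an $n$-dependence, I would switch to a Naimark-type argument. The idea is to dilate the POVM to a PVM on $H\otimes\ell^2_n$, note that the dilation is close to $\sum_i q_i\otimes e_{ii}$ in $\rho\otimes e_{11}$-norm, and compress back using a unitary in $\mc{M}\otimes\mbb{M}_n$ close to the identity on the relevant corner. I expect reaching the specific constant $9$ to need this more careful argument.
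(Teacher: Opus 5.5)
The paper gives no proof of this statement. It is imported from \cite{dlS22} and used as a black box, notably with the non-tracial state $x\mapsto\mathrm{tr}(xh_\psi)$. Your proposal does not supply a proof either. Stage 1 is sound: the pointwise inequality gives $\sum_i\rho(|a_i-q_i|^2)\le\varepsilon$, and $q_jq_iq_j\le 2q_ja_iq_j$ together with $q_j(1-a_j)\le 2a_j(1-a_j)$ gives $\sum_{i\neq j}\rho(q_jq_iq_j)\le 4\varepsilon$. However, Stage 3, which carries the whole quantitative content, is only announced. Worse, the mechanism you propose for it fails for a non-tracial state: the error of your sequential range-projection construction is not controlled by the $\rho$-overlaps at all. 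Here is a counterexample. Work in $2\times 2$ matrices with $v=(1,0)$, $w=(\cos\theta,\sin\theta)$, $u=(\sin\frac{\theta}{2},-\cos\frac{\theta}{2})$, and set $a_1=\frac12 vv^\ast$, $a_2=\frac12 ww^\ast$, $a_3=1-a_1-a_2$, $\rho=\langle u,\cdot\,u\rangle$. Then $a_3u=\cos^2\frac{\theta}{2}\,u$, so $q_1=vv^\ast$, $q_2=ww^\ast$, $q_3=uu^\ast$, and
\[
\varepsilon=1-\sum_i\rho(a_i^2)=\sin^2\tfrac{\theta}{2}\bigl(\tfrac12+\cos^2\tfrac{\theta}{2}\bigr)=O(\theta^2).
\]
Every overlap term $\rho(q_jq_iq_j)$ is also $O(\theta^2)$. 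Now run Gram--Schmidt in the order $1,2,3$. You get $p_1=vv^\ast$, and $p_2$ is the range projection of $(1-p_1)q_2$, namely the projection onto $(0,1)$. Then $p_3=0$ and there is no leftover, so $\sum_i\rho(|a_i-p_i|^2)\ge\rho(a_3^2)=\cos^4\frac{\theta}{2}\to 1$. The mechanism is this: $\rho$ is nearly blind to $q_1$ and $q_2$, which almost coincide as operators. Taking a range projection inflates the tiny sliver $(1-q_1)q_2$ into a full projection that $\rho$ does see, and that projection crowds out $q_3$. So no ordering-independent bound holds for this construction, and you give no rule for choosing a good order, nor any reason one exists when there are many outcomes.

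The Naimark-type fallback restates the problem rather than solving it. What you need is an isometry $W\in M_{n,1}(\mathcal{M})$ whose range projection lies in the diagonal subalgebra, so that $W^\ast(e_{ii}\otimes 1)W$ is a PVM in $\mathcal{M}$. It must be close to the Naimark isometry on the GNS vector of $\rho$, and you then need an estimate transferring that closeness back to $\sum_i\rho(|a_i-p_i|^2)$ with a constant independent of $n$. Closeness here holds only in a $\rho$-weighted $2$-norm, so the standard unitary implementing equivalence of nearby projections (the polar part of $RP+(1-R)(1-P)$) is unavailable, since it needs $\|P-R\|<1$ in operator norm. On the part where the Naimark projection and the diagonal target are far apart, mass has to be reassigned by a Murray--von Neumann comparison argument in a general $\mathcal{M}$. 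Supplying exactly this construction and estimate is what \cite{dlS22} does. Without it, your proposal establishes only Stage 1.
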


The following lemma shows that (with respect to KMS norm) if jointly measurable POVMs are approximately PVMs, then their joint measurement is also approximately a PVM. Note that this result is the only place where the dependence on the number of variables per constraint in \cref{thm:main} arises.

\begin{lemma}\label{lem:approx-jointly-meas}
    Let $\mc{M}\subseteq\mc{B}(H)$ be a von Neumann algebra and let $h\in L^1(\mc{M})$ be a positive element such that $\tr(h)=1$. Let $\pi$ be a strictly positive probability distribution of $[n]$. Suppose that $\{A_{\mbf{i}}\}_{\mbf{i}\in[k]^n}\subseteq\mc{M}$ is a POVM such that $\sum_{j=1}^n\pi(j)\sum_{i=1}^k\norm{h^{1/4}A^j_ih^{1/4}}_2^2\geq1-\delta$. Then,
    $$ \sum_{\mbf{i}\in[k]^n}\norm{h^{1/4}A_{\mbf{i}}h^{1/4}}_2^2\geq1-M\delta,$$
    where $M=\max_i\frac{1}{\pi(i)}$.
\end{lemma}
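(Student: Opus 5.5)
The plan is to rewrite the KMS norm in terms of a positive bilinear pairing, so that the ``defect from being a projection'' of each POVM element becomes a nonnegative quantity that is monotone in one argument. I would then use a union-bound style operator inequality to control the defect of the joint measurement by the defects of the marginals.

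For $x,y\in\mc{M}$ set $\ang*{x,y}_h=\tr(h^{1/2}xh^{1/2}y)$. For self-adjoint $x$ this gives $\norm{h^{1/4}xh^{1/4}}_2^2=\ang*{x,x}_h$, using the tracial property of $\tr$ on products of elements of $L^4(\mc{M})$ and $\mc{M}$.

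\textbf{Step 1 (positivity).} If $x,y\in\mc{M}_+$, then
$$\ang*{x,y}_h=\tr\bigl(y^{1/2}h^{1/2}xh^{1/2}y^{1/2}\bigr)=\norm{x^{1/2}h^{1/2}y^{1/2}}_2^2\geq0 .$$
The manipulation is the same one used in the perfect commuting operator case. Consequently $y\mapsto\ang*{x,y}_h$ is monotone on self-adjoint elements whenever $x\geq0$.

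For $0\leq x\leq 1$ define the defect $d(x)=\ang*{x,1-x}_h\geq0$. Since $\ang*{x,1}_h=\tr(hx)$, for any POVM $\{x_i\}$ we get
$$\sum_i\norm{h^{1/4}x_ih^{1/4}}_2^2=\sum_i\tr(hx_i)-\sum_i d(x_i)=1-\sum_i d(x_i).$$
Writing $d_j=\sum_{i}d(A^j_i)$, the hypothesis therefore reads $\sum_j\pi(j)d_j\leq\delta$. The conclusion is equivalent to $\sum_{\mbf{i}}d(A_{\mbf{i}})\leq M\delta$.

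\textbf{Step 2 (union bound).} Fix $\mbf{i}$. Every $\mbf{i}'\neq\mbf{i}$ differs from $\mbf{i}$ in at least one coordinate $j$, and so $A_{\mbf{i}'}$ appears in $1-A^j_{i_j}=\sum_{\mbf{i}'':\,i''_j\neq i_j}A_{\mbf{i}''}$. Since all the $A_{\mbf{i}'}$ are positive, this gives the operator inequality
$$1-A_{\mbf{i}}=\sum_{\mbf{i}'\neq\mbf{i}}A_{\mbf{i}'}\leq\sum_{j=1}^n\bigl(1-A^j_{i_j}\bigr).$$
By the monotonicity from Step 1 with $x=A_{\mbf{i}}\geq0$,
$$d(A_{\mbf{i}})\leq\sum_j\ang*{A_{\mbf{i}},1-A^j_{i_j}}_h .$$

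\textbf{Step 3 (resumming).} Sum the bound from Step 2 over $\mbf{i}$. For fixed $j$, group the terms according to $i_j=i$; linearity in the first slot gives $\sum_{\mbf{i}:\,i_j=i}A_{\mbf{i}}=A^j_i$. Hence
$$\sum_{\mbf{i}}d(A_{\mbf{i}})\leq\sum_j\sum_i\ang*{A^j_i,1-A^j_i}_h=\sum_j d_j .$$
Finally, $d_j\geq0$ and $1\leq M\pi(j)$ for every $j$, so
$$\sum_j d_j\leq M\sum_j\pi(j)d_j\leq M\delta .$$

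The only step needing care is the positivity in Step 1 in Haagerup's $L^p$ setting: one has to justify $\tr(h^{1/2}xh^{1/2}y)=\norm{x^{1/2}h^{1/2}y^{1/2}}_2^2$, using $h^{1/2}\in L^2(\mc{M})$ and the trace property for $L^2\times L^2$ products. The rest is finite-dimensional bookkeeping. The factor $M$, which is the source of the $L$-dependence in \cref{thm:main}, enters only in Step 3 through the crude comparison $\sum_jd_j\leq M\sum_j\pi(j)d_j$.
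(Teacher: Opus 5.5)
Your proof is correct and is essentially the paper's argument. The paper writes $1-\sum_{\mbf{i}}\|h^{1/4}A_{\mbf{i}}h^{1/4}\|_2^2$ as the sum of the off-diagonal cross terms $\tr(A_{\mbf{i}}h^{1/2}A_{\mbf{i}'}h^{1/2})\geq 0$ and bounds it by $\sum_j\delta_j$ with a union bound over the coordinates where $\mbf{i}\neq\mbf{i}'$; your operator inequality $1-A_{\mbf{i}}\leq\sum_j(1-A^j_{i_j})$, paired through the positive form $\langle\cdot,\cdot\rangle_h$, is the same bound, and you also spell out the positivity of those cross terms, which the paper leaves implicit.
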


Before passing to the proof, note that this bound is tight in the limit of small $\delta$. Take $k=2$, $\pi$ uniform, $\mc{M}=\mc{B}(\C^2)^{\otimes n}$ with tracial state $\frac{1}{2^n}\Tr$, and $A_{\mbf{i}}=\bigotimes_{j}\parens*{(1-\varepsilon)\ketbra{i_j}+\frac{\varepsilon}{2}I}$. Then, for any $j$, $$\sum_{i=1,2}\frac{1}{2^n}\Tr((A^j_i)^2)=\frac{1}{2}\sum_{i=1,2}\Tr\parens*{((1-\varepsilon)\ketbra{i}+\tfrac{\varepsilon}{2}I)^2}=\frac{1}{2}\sum_{i=1,2}\Tr\parens*{(1-\varepsilon)\ketbra{i}+\tfrac{\varepsilon^2}{4}I}=1-\varepsilon+\tfrac{\varepsilon^2}{2},$$
giving $\delta=\varepsilon-\frac{\varepsilon^2}{2}$. In the same way, we have that
$$\sum_{\mbf{i}}\frac{1}{2^n}\Tr(A_{\mbf{i}})=\parens*{\frac{1}{2}\sum_{i=1,2}\Tr\parens*{((1-\varepsilon)\ketbra{i}+\tfrac{\varepsilon}{2}I)^2}}^n=(1-\delta)^n\leq 1-n\delta+\binom{n}{2}\delta^2.$$
For small $\delta\ll\frac{1}{n}$, this is approximately $1-n\delta$.

\begin{proof}
    Write $\delta_j=1-\sum_{i=1}^k\norm{h^{1/4}A^j_ih^{1/4}}_2^2$. Then, $$\sum_{i}\norm{h^{1/4}\sum_{\mbf{i}\in[k]^n:\,i_j=i}A_{\mbf{i}}h^{1/4}}_2^2=\sum_{i}\norm{h^{1/4}A_{i}^jh^{1/4}}_2^2\geq 1-\delta_j.$$ Hence,
    \begin{align*}
        \delta_j&\geq\norm[\Big]{\sum_{\mbf{i}}h^{1/4}A_{\mbf{i}}h^{1/4}}_2^2-\sum_{i}\norm[\Big]{\sum_{\mbf{i}:\,i_j=i}h^{1/4}A_{\mbf{i}}h^{1/4}}_2^2\\
        &=\sum_{\mbf{i},\mbf{i}'}\tr(A_{\mbf{i}}h^{1/2}A_{\mbf{i}'}h^{1/2})-\sum_{\substack{\mbf{i},\mbf{i}':\,i_j=i_j'}}\tr(A_{\mbf{i}}h^{1/2}A_{\mbf{i}'}h^{1/2})\\
        &=\sum_{\mbf{i},\mbf{i}':\,i_j\neq i_j'}\tr(A_{\mbf{i}}h^{1/2}A_{\mbf{i}'}h^{1/2}).
    \end{align*}
    Putting these together via a union bound,
    \begin{align*}
        \sum_{\mbf{i}}\norm[\big]{h^{1/4}A_{\mbf{i}}h^{1/4}}_2^2&=1-\sum_{\mbf{i},\mbf{i}':\,\mbf{i}\neq\mbf{i}'}\tr(A_{\mbf{i}}h^{1/2}A_{\mbf{i}'}h^{1/2})\\
        &\geq1-\sum_j\sum_{\mbf{i},\mbf{i}':\,i_j\neq i_j'}\tr(A_{\mbf{i}}h^{1/2}A_{\mbf{i}'}h^{1/2})\\
        &\geq1-\sum_j\delta_j\geq1-M\sum_j\pi(j)\delta_j\geq1-M\delta.\qedhere
    \end{align*}
\end{proof}

\begin{proof}[Proof of \cref{thm:main}]
    We may also assume that $\varepsilon\leq\frac{1}{L}$, as otherwise the result holds trivially.
    
    As in the previous section, we take $\mc{M}\subseteq\mc{B}(H)$ to be the von Neumann algebra generated by the $Q^j_b$. Then, there exist a positive $h_\psi\in L^1(\mc{M})$ and POVMs $\{\bar{P}^i_a\}_{a\in  A_i}\subseteq\mc{M}$ such that $\braket{\psi}{P^i_a Q^j_b}{\psi}=\tr(\bar{P}^i_a h_\psi^{1/2}Q^j_b h_\psi^{1/2})$. Therefore, writing $\bar{P}^{i,j}_b=\sum_{a\in p_{i,j}^{-1}(b)}\bar{P}^i_a$
    \begin{align*}
        \mfk{w}_{G}(c)&=\sum_{i,j}\mu(i,j)\sum_{a\in A_i}\braket{\psi}{P^i_a Q^j_{p_{i,j}(a)}}{\psi}\\
        &=\sum_{i,j}\mu(i,j)\sum_{a\in A_i}\tr(\bar{P}^i_a h_\psi^{1/2}Q^j_{p_{i,j}(a)} h_\psi^{1/2})\\
        &=\sum_{i,j,b}\mu(i,j)\ang*{h_{\psi}^{1/4}\bar{P}^{i,j}_b h_\psi^{1/4}, h_\psi^{1/4}Q^j_b h_\psi^{1/4}}.
    \end{align*}
    From here, we often make use of three Hilbert spaces: $V_1=L^2(\mc{M})^{\bigcup_{j\in J}I\times\{j\}\times B_j}$ equipped with the inner product $$\ang*{(u_{i,j,b})_{i,j,b},(v_{i,j,b})_{i,j,b}}_{V_1}=\sum_{i,j,b}\mu(i,j)\ang*{u_{i,j,b},v_{i,j,b}},$$ and $V_2=L^2(\mc{M})^{\bigsqcup_{i\in I}\{i\}\times J\times A_i}$ equipped with the inner product $$\ang*{(u_{i,j,a})_{i,j,a},(v_{i,j,a})_{i,j,a}}_{V_2}=\sum_{i,j,a}\mu(i,j)\ang*{u_{i,j,a},v_{i,j,a}}.$$ Consider $(h_{\psi}^{1/4}\bar{P}^{i,j}_b h_\psi^{1/4})_{i,j,b}$ and $(h_{\psi}^{1/4}Q^{j}_b h_\psi^{1/4})_{i,j,b}$ as vectors in $V_1$. First, these vectors have norm at most $1$. In fact,
    \begin{align*}
        \norm{(h_{\psi}^{1/4}\bar{P}^{i,j}_b h_\psi^{1/4})_{i,j,b}}^2_{V_1}&=\sum_{i,j,b}\pi(i,j)\tr(\bar{P}^{i,j}_bh_{\psi}^{1/2}\bar{P}^{i,j}_b h_\psi^{1/2})\\
        &\leq\sum_{i,j,b}\mu(i,j)\tr(h_{\psi}^{1/2}\bar{P}^{i,j}_b h_\psi^{1/2})\\
        &\leq\sum_{i,j}\mu(i,j)\tr(h_{\psi})=1,
    \end{align*}
    and identically $\norm{(h_{\psi}^{1/4}Q^{j}_b h_\psi^{1/4})_{i,j,b}}_{V_1}\leq 1$. By hypothesis, we know that $$\ang*{(h_{\psi}^{1/4}\bar{P}^{i,j}_b h_\psi^{1/4})_{i,j,b},(h_{\psi}^{1/4}Q^{j}_b h_\psi^{1/4})_{i,j,b}}_{V_1}\geq 1-\varepsilon,$$ so by Cauchy-Schwarz, $$1-\varepsilon\leq\norm{(h_{\psi}^{1/4}\bar{P}^{i,j}_b h_\psi^{1/4})_{i,j,b}}_{V_1}\norm{(h_{\psi}^{1/4}Q^{j}_b h_\psi^{1/4})_{i,j,b}}_{V_1}\leq\min\!\!\set*{\norm{(h_{\psi}^{1/4}\bar{P}^{i,j}_b h_\psi^{1/4})_{i,j,b}}_{V_1},\norm{(h_{\psi}^{1/4}Q^{j}_b h_\psi^{1/4})_{i,j,b}}_{V_1}}.$$ This gives that the commutator
    \begin{align*}
        \sum_{i,j,b}\mu(i,j)\norm[\big]{[Q^j_b,h_\psi^{1/2}]}_2^2&=\sum_{i,j,b}\mu(i,j)\tr\parens*{(Q^j_bh_\psi^{1/2}-h_\psi^{1/2}Q^j_b)(h_\psi^{1/2}Q^j_b-Q^j_bh_\psi^{1/2})}\\
        &=2\sum_{i,j,b}\mu(i,j)\tr\parens*{(Q^j_b)^2h_\psi-h_\psi^{1/2}Q^j_bh_\psi^{1/2}Q^j_b}\\
        &\leq 2-2\norm{(h_\psi^{1/4}Q^j_bh_\psi^{1/4})_{i,j,b}}_{V_1}^2\leq 2-2(1-\varepsilon)^2\leq 4\varepsilon.
    \end{align*}
    Next, from the fact that $\sum_{i,j,b}\mu(i,j)\norm[\big]{h_\psi^{1/4}\bar{P}^{i,j}_bh_\psi^{1/4}}_2^2\geq(1-\varepsilon)^2\geq1-2\varepsilon$, we can use \cref{lem:approx-jointly-meas} to get that $\sum_{i\in I,j\in J,a\in A_i}\mu(i,j)\norm[\big]{h_\psi^{1/4}\bar{P}^i_a h_\psi^{1/4}}_2^2\geq1-2L\varepsilon$. Hence, it follows identically that $$\sum_{i,j,a}\mu(i,j)\norm[\big]{[\bar{P}^i_a,h_\psi^{1/2}]}_2^2\leq 4L\varepsilon.$$ From here, it follows that
    \begin{align*}
        1-2\varepsilon&\leq\sum_{i,j,b}\mu(i,j)\tr(Q^j_bh_\psi^{1/2}Q^j_bh_\psi^{1/2})\\
        &=\sum_{i,j,b}\mu(i,j)\ang*{h_\psi^{1/2}Q^j_b,Q^j_bh_\psi^{1/2}}\\
        &\leq\sum_{i,j,b}\mu(i,j)\norm{h_\psi^{1/2}Q^j_b}_2\norm{Q^j_bh_\psi^{1/2}}_2\\
        &=\sum_{i,j,b}\mu(i,j)\tr((Q^j_b)^2h_\psi).
    \end{align*}
    In the same way, $$\sum_{i,j,a}\mu(i,j)\tr((\bar{P}^i_a)^2h_{\psi})\geq1-2L\varepsilon.$$ Since for all $x\in\mc{M}$, $\tr(xh_\psi)=\braket{\psi}{x}{\psi}$, the map $x\mapsto\tr(\cdot h_\psi)$ is a normal state on $\mc{M}$. Hence, by \cref{thm:dls-orth}, there exist PVMs $\{p^i_a\}_{a\in A_i},\{q^j_b\}_{b\in B_j}\subseteq\mc{M}$ such that
    \begin{align*}
        &\norm{((Q^j_b-q^j_b)h_\psi^{1/2})_{i,j,b}}_{V_1}^2=\sum_{i,j,b}\mu(i,j)\tr\parens*{\parens*{Q^j_b-q^j_b}^2h_\psi}\leq 18\varepsilon,\\
        &\norm{((\bar{P}^i_a-p^i_a)h_\psi^{1/2})_{i,j,a}}_{V_2}^2=\sum_{i,j,a}\mu(i,j)\tr(\parens*{\bar{P}^i_a-p^i_a}^2h_\psi)\leq 18L\varepsilon.
    \end{align*}
    Now, using again the norm on $V_1$, we get the approximate commutation
    \begin{align*}
        \sum_{i,j,b}\mu(i,j)\norm[\big]{[q^j_b,h_\psi^{1/2}]}_2^2&=\norm{([q^j_b,h_\psi^{1/2}])_{i,j,b}}_{V_1}^2\\
        &=\norm[\big]{([Q^j_b,h_\psi^{1/2}])_{i,j,b}-((Q^j_b-q^j_b)h_\psi^{1/2})_{i,j,b}+(h_\psi^{1/2}(Q^j_b-q^j_b))_{i,j,b}}_{V_1}^2\\
        &\leq\parens*{\norm[\big]{([Q^j_b,h_\psi^{1/2}])_{i,j,b}}_{V_1}+2\norm[\big]{((Q^j_b-q^j_b)h_\psi^{1/2})_{i,j,b}}_{V_1}}^2\\
        &\leq(\sqrt{4\varepsilon}+2\sqrt{18\varepsilon})^2\leq110\varepsilon
    \end{align*}
    and in the same way using $V_2$, $\sum_{i,j,a}\mu(i,j)\norm[\big]{[p^i_a,h_\psi^{1/2}]}_2^2\leq110L\varepsilon$. To finish the proof, we want to make use of \cref{lem:mdls}. Define $X=I\sqcup J$ and $A=\sqcup_{i,j} A_i\sqcup B_j$. Take $\pi$ to be the probability distribution on $X\times X$ defined as
    \begin{align*}
        \pi(x,x')=\begin{cases}\frac{1}{2}\mu(x,x')&x\in I\land x'\in J\\\frac{1}{2}\mu(x',x)&x'\in I\land x\in J\\0&\text{else}\end{cases}.
    \end{align*}
    Note that $\pi$ is symmetric and the marginal of $\pi$ on one copy of $X$ is
    \begin{align*}
        \nu(x)=\begin{cases}\frac{1}{2}\sum_{j\in J}\mu(x,j)&x\in I\\\frac{1}{2}\sum_{i\in I}\mu(i,x)&x\in J\end{cases}.
    \end{align*}
    Let $\tilde{p}^x_a=p^x_a$ if $x\in I$ and $\tilde{p}^x_a=q^x_a$ if $x\in J$. By the above, we have that
    \begin{align*}
        \expec_{x\leftarrow\nu}\sum_a\norm{[\tilde{p}^x_a,h_\psi^{1/2}]}_2^2\leq110L\varepsilon.
    \end{align*}
    Then, using \cref{lem:mdls}, we get that there exist PVMs $\{r^x_a\}_{a\in A}\subseteq qc(\mc{M})q$ such that $$\expec_{(x,y)\leftarrow\pi}\sum_{a,b\in A}\abs*{\tr(\tilde{p}^x_ah_\psi^{1/2}\tilde{p}^y_bh_\psi^{1/2})-\tau(r^x_ar^y_b)}\leq38(110L\varepsilon)^{1/4}\leq124(L\varepsilon)^{1/4},$$ which implies
    \begin{align*}
        \sum_{i,j,a}\mu(i,j)\abs*{\tr(p^i_a h_\psi^{1/2}q^j_{p_{i,j}(a)}h_\psi^{1/2})-\tau(r^i_a r^j_{p_{i,j}(a)})}\leq124(L\varepsilon)^{1/4}.
    \end{align*}
    Now, note that
    \begin{align*}
        \sum_{i,j,a}\mu(i,j)\tr\parens[\big]{\parens[\big]{\bar{P}^i_a-(\bar{P}^i_a)^{1/2}}^2h_\psi}&=\sum_{i,j,a}\mu(i,j)\tr\parens[\big]{\parens[\big]{(\bar{P}^i_a)^2+\bar{P}^i_a-2(\bar{P}^i_a)^{3/2}}h_\psi}\\
        &\leq\sum_{i,j,a}\mu(i,j)\tr\parens[\big]{\parens[\big]{\bar{P}^i_a-(\bar{P}^i_a)^2}h_\psi}\\
        &\leq1-(1-2L\varepsilon)=2L\varepsilon,
    \end{align*}
    and in the same way, $\sum_{i,j,b}\mu(i,j)\tr\parens*{(Q^j_b-(Q^j_b)^{1/2})^2h_\psi}\leq 2\varepsilon$. Then, we can bound
    \begin{align*}
        &\abs[\Big]{\sum_{i,j,a}\mu(i,j)\tr(\bar{P}^i_a h_\psi^{1/2}Q^j_{p_{i,j}(a)}h_\psi^{1/2})-\sum_{i,j,a}\mu(i,j)\tr(p^i_a h_\psi^{1/2}q^j_{p_{i,j}(a)}h_\psi^{1/2})}\\
        &=\abs*{\norm[\big]{((\bar{P}^i_a)^{1/2} h_\psi^{1/2}(Q^j_{p_{i,j}(a)})^{1/2})_{i,j,a}}_{V_2}^2-\norm[\big]{(p^i_a h_\psi^{1/2}q^j_{p_{i,j}(a)})_{i,j,a}}_{V_2}^2}\\
        &\leq2\abs*{\norm[\big]{((\bar{P}^i_a)^{1/2} h_\psi^{1/2}(Q^j_{p_{i,j}(a)})^{1/2})_{i,j,a}}_{V_2}-\norm[\big]{(p^i_a h_\psi^{1/2}q^j_{p_{i,j}(a)})_{i,j,a}}_{V_2}}\\
        &\leq2\norm[\big]{((\bar{P}^i_a)^{1/2} h_\psi^{1/2}(Q^j_{p_{i,j}(a)})^{1/2})_{i,j,a}-(p^i_a h_\psi^{1/2}q^j_{p_{i,j}(a)})_{i,j,a}}_{V_2}\\
        &\leq2\norm[\big]{((p^i_a-(\bar{P}^i_a)^{1/2})h_{\psi}^{1/2}(Q^j_{p_{i,j}(a)})^{1/2})_{i,j,a}}_{V_2}+2\norm[\big]{(p^i_a h_\psi^{1/2}(q^j_{p_{i,j}(a)}-(Q^j_{p_{i,j}(a)})^{1/2}))_{i,j,a}}_{V_2}\\
        &=2\sqrt{\sum_{i,j,a}\mu(i,j)\tr((p^i_a-(\bar{P}^i_a)^{1/2})^2h_{\psi}^{1/2}Q^j_{p_{i,j}(a)}h_\psi^{1/2})}+2\sqrt{\sum_{i,j,b}\mu(i,j)\tr(p^{i,j}_b h_\psi^{1/2}(q^j_{b}-(Q^j_{b})^{1/2})^2h_\psi^{1/2})}
   	\end{align*}
	
	\begin{align*}
        &\leq2\sqrt{\sum_{i,j,a}\mu(i,j)\tr((p^i_a-(\bar{P}^i_a)^{1/2})^2h_{\psi})}+2\sqrt{\sum_{i,j,b}\mu(i,j)\tr((q^j_{b}-(Q^j_{b})^{1/2})^2h_\psi)}\hspace{3.5cm}\\
        &=2\norm[\big]{((p^i_a-(\bar{P}^i_a)^{1/2})h_{\psi}^{1/2})_{i,j,a}}_{V_2}+2\norm[\big]{((q^j_{b}-(Q^j_{b})^{1/2})h_\psi^{1/2})_{i,j,b}}_{V_1}\\
        &\leq2\norm[\big]{((p^i_a-\bar{P}^i_a)h_{\psi}^{1/2})_{i,j,a}}_{V_2}+\norm[\big]{((\bar{P}^i_a-(\bar{P}^i_a)^{1/2})h_{\psi}^{1/2})_{i,j,a}}_{V_2}+2\norm[\big]{((q^j_{b}-Q^j_{b})h_\psi^{1/2})_{i,j,b}}_{V_1}\\
        &\qquad\qquad+2\norm[\big]{((Q^j_{b}-(Q^j_{b})^{1/2})h_\psi^{1/2})_{i,j,b}}_{V_1}\\
        &\leq2\sqrt{18L\varepsilon}+2\sqrt{2L\varepsilon}+2\sqrt{18\varepsilon}+2\sqrt{2\varepsilon}\\
        &\leq23(L\varepsilon)^{1/2}.
    \end{align*}
    Via the triangle inequality, we get that
    \begin{align*}
        \abs[\Big]{\sum_{i,j,a}\mu(i,j)\parens*{\braket{\psi}{P^i_a Q^j_{p_{i,j}(a)}}{\psi}-\tau(r^i_a r^j_{p_{i,j}(a)})}}\leq23(L\varepsilon)^{1/2}+124(L\varepsilon)^{1/4}\leq147(L\varepsilon)^{1/4}.
    \end{align*}
    To finish, take $\{\hat{p}^i_a\}_{a\in A_i}$ to be the completion of $\{r^i_a\}_{a\in A_i}$ to a PVM by adding $1-\sum_{a\in A_i}r^i_a$ to an arbitrary element; and construct $\{\hat{q}^j_b\}_{b\in B_j}$ from $\{r^j_b\}_{b\in B_j}$ in the same way. Then $c'(a,b|i,j)=\tau(\hat{p}^i_a\hat{q}^j_b)$ is a tracial correlation and we find that
    \begin{align*}
        \mfk{w}_{G}(c')&=\sum_{i,j}\mu(i,j)\sum_{a\in A_i}\tau(\hat{p}^i_a\hat{q}^j_{p_{i,j}(a)})\\
        &\geq \sum_{i,j}\mu(i,j)\sum_{a\in A_i}\tau(r^i_a r^j_{p_{i,j}(a)})\\
        &\geq\mfk{w}_{G}(c)-\abs[\big]{\sum_{i,j,a}\mu(i,j)\parens*{\braket{\psi}{P^i_a Q^j_{p_{i,j}(a)}}{\psi}-\tau(r^i_a r^j_{p_{i,j}(a)})}}\\
        &\geq1-148(L\varepsilon)^{1/4}.\qedhere
    \end{align*}
\end{proof}

\begin{corollary}\label{cor:approx-finite}
    Let $G$ be as in the theorem. Suppose $c(a,b|i,j)=\braket{\psi}{P^i_a\otimes Q^j_b}{\psi}$ is a quantum correlation such that $\mfk{w}_{G}(c)\geq 1-\varepsilon$. Then, there exists a finite-dimensional tracial correlation $c'$ such that $\mfk{w}_{G}(c')\geq 1-148(L\varepsilon)^{1/4}$.
\end{corollary}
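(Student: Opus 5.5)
Since the theorem already gives a tracial correlation, the only issue is that $c'$ should be finite-dimensional. The plan is to rerun the proof of \cref{thm:main} in the concrete setting of \cref{ex:type-i}, where every object involved can be written down as a matrix. First apply \cref{lem:simplifying-q-corrs} to write $c(a,b|i,j)=\braket{\phi}{P^i_a\otimes Q^j_b}{\phi}$ with $H=\C^d$ and $\ket{\phi}=\sum_n\lambda_n\ket{n}\ket{n}$, $\lambda_n>0$. This is a commuting operator correlation on $H\otimes H$: Alice's operators are $P^i_a\otimes1$ and Bob's are $1\otimes Q^j_b$. The von Neumann algebra generated by Bob's operators is then contained in $1\otimes\mbb{M}_d$, and hence is a finite-dimensional algebra $\mc{M}$.

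In this setting the quantities in the proof become explicit. Take $h_\psi=D^2$ with $D=\sum_n\lambda_n\ketbra{n}$. The rounded operators $\bar{P}^i_a$ can be taken to be $(P^i_a)^T\in\mbb{M}_d$, since
\[
\braket{\phi}{P\otimes Q}{\phi}=\Tr(P^T D Q D).
\]
Because $\mc{M}$ is finite-dimensional, the PVMs produced by \cref{thm:dls-orth} lie in $\mc{M}$, so they are matrices. The main step is then to check that the PVMs $r^x_a$ produced by \cref{lem:mdls} live in a finite-dimensional algebra equipped with a tracial state. These $r^x_a$ lie in $q\,c(\mc{M})\,q$ with $q=\chi_{(1,\infty)}(h)$. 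By \cref{ex:type-i}, we have $c(\mc{M})=L^\infty(\R)\otimes\mc{M}$, and $h(t)=D^2e^{t}$. It follows that $q(t)$ is the spectral projection of $D^2$ onto eigenvalues greater than $e^{-t}$. Consequently $q\,c(\mc{M})\,q$ is a type I von Neumann algebra, essentially $L^\infty$ over $\R$ of matrix algebras, and it is not finite-dimensional.

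This is the main obstacle: $q\,c(\mc{M})\,q$ is infinite-dimensional, so a further reduction is needed. I would use the fact that this algebra is a direct integral of matrix algebras. The tracial state $\tau$ then disintegrates as
\[
\tau(x)=\int \Tr(q(t)x(t)q(t))\,e^{-t}\,dt ,
\]
that is, as a convex combination over $t$ of normalized traces $\tau_t$ on $q(t)\mc{M}q(t)$ with weights $w(t)=\Tr(q(t))e^{-t}$. Here $\int w=1$, which is exactly the condition $\tau(q)=1$. Since $t\mapsto q(t)$ takes only finitely many values, the integral breaks into finitely many pieces. The winning probability
\[
\mfk{w}_G(c')=\int w(t)\,\mfk{w}_G\bigl(c'_t\bigr)\,dt
\]
is therefore an average of the winning probabilities of the fibre correlations $c'_t(a,b|i,j)=\tau_t(\hat p^i_a(t)\hat q^j_b(t))$. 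These are finite-dimensional tracial correlations, provided the fibres $r(t)$ are projections for almost every $t$, which holds because a projection in $L^\infty(\R)\otimes\mbb{M}_d$ is a projection pointwise almost everywhere.

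Some $t$ in a set of positive weight must have $\mfk{w}_G(c'_t)\ge\mfk{w}_G(c')\ge1-148(L\varepsilon)^{1/4}$. For that $t$, the correlation $c'_t$ on $\mbb{M}_d$ with the normalized trace on $q(t)\C^d$ is the required finite-dimensional tracial correlation. If instead \cref{lem:mdls} is used as a black box without this explicit structure, the fallback is to approximate the $r^x_a$ by step functions in $t$. This works because the relevant $L^2$ quantities are continuous, and the same averaging argument then applies.
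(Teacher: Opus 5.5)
Your proposal is correct and is essentially the paper's proof. As in the paper, you identify $c(\mc{M})=L^\infty(\R)\otimes\mc{M}$ for the finite-dimensional algebra $\mc{M}$, and write $\tau$ on $qc(\mc{M})q$ as an average of the normalized traces $\Tr(\cdot)/\Tr(q(t))$ on $q(t)\mc{M}q(t)$ with weights $\Tr(q(t))e^{-t}$, using that the fibres of the PVMs are projections, then pick a fibre whose winning probability is at least the average. One small caveat: your explicit identifications $h_\psi=D^2$ and $\bar{P}^i_a=(P^i_a)^T$ hold only if you take $\mc{M}=1\otimes\mbb{M}_d$ rather than the possibly smaller algebra generated by Bob's operators, but the averaging argument does not depend on them.
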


\begin{proof}
    Take $\mc{M}$ to be the von Neumann algebra generated by the $1\otimes Q^j_b$; this is a finite-dimensional von Neumann algebra contained in $\mc{B}(H_A\otimes H_B)$. Then, following \cref{ex:type-i}, $c(\mc{M})=L^\infty(\R)\otimes\mc{M}$. From the theorem, there exists a correlation $c'(a,b|i,j)=\tau(p^i_a q^j_b)$ such that $\mfk{w}_{G}(c')\geq 1-148(L\varepsilon)^{1/4}$, where $\tau$ is the semifinite weight $\tau(x)=\int_{-\infty}^\infty\Tr(x(t))e^{-t}dt$ on $c(\mc{M})$, and $p^i_a,q^j_b\in qc(\mc{M})q$ for some projection $q\in c(\mc{M})$ satisfying $\tau(q)=1$. By construction, $q(t),p^i_a(t),q^j_b(t)$ are projections for every $t\in\R$, so
    \begin{align*}
        \tau(p^i_a q^j_b)=\int_{-\infty}^\infty\Tr(p^i_a(t) q^j_b(t))e^{-t}dt=\int_{\set*{t}{\Tr(q(t))\neq 0}}\frac{1}{\Tr(q(t))}\Tr(p^i_a(t) q^j_b(t))\Tr(q(t))e^{-t}dt
    \end{align*}
    is a convex combination of finite-dimensional tracial correlations. This implies that at least one of the correlations has winning probability $\geq 1-148(L\varepsilon)^{1/4}$, completing the proof.
\end{proof}

Now, we show that the same result can be recovered for degenerate games.

\begin{corollary}\label{cor:main-degenerate}
    Suppose $G$ is an arbitrary projection game. Suppose there exists a commuting operator correlation $c$ for $G$ such that $\mfk{w}_G(c)\geq 1-\varepsilon$. Then, there exists a tracial correlation $c'$ for $G$ such that $\mfk{w}_{G}(c')\geq 1-148(L\varepsilon)^{1/4}$. If $c$ is quantum, then $c'$ can be chosen to be finite-dimensional.
\end{corollary}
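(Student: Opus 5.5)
The plan is to reduce to the non-degenerate case via \cref{lem:degenerate-games} and then apply \cref{thm:main} and \cref{cor:approx-finite}. Concretely, given an arbitrary projection game $G$, let $\tilde{G}$ be the non-degenerate game constructed in the proof of \cref{lem:degenerate-games}. Its answer sets are $\tilde{A}_i\subseteq A_i$, a set of representatives of the fibres of $p_i$. Its question sets, question distribution $\mu$, and projection maps $\tilde{p}_{i,j}=p_{i,j}|_{\tilde{A}_i}$ are inherited from $G$. In particular, the parameter $L=\max_{i,j:\mu(i,j)\neq0}\frac{\mu(i)}{\mu(i,j)}$ depends only on $\mu$, so it is the same for $G$ and $\tilde{G}$. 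This is the one point to check carefully, and it is immediate since the reduction leaves the question distribution untouched.

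First, starting from a commuting operator correlation $c$ for $G$ with $\mfk{w}_G(c)\geq1-\varepsilon$, I form $\tilde{c}(a,b|i,j)=\sum_{a'\in p_i^{-1}(p_i(a))}c(a',b|i,j)$. By \cref{lem:degenerate-games}, $\tilde{c}$ is a commuting operator correlation, realised by the POVMs $\tilde{P}^i_a=\sum_{a'\in p_i^{-1}(p_i(a))}P^i_{a'}$, and $\mfk{w}_{\tilde{G}}(\tilde{c})=\mfk{w}_G(c)\geq1-\varepsilon$. If $c$ is quantum, then $\tilde{c}$ is quantum too, since the post-processing acts only on Alice's tensor factor.

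Next, I apply \cref{thm:main} to $\tilde{G}$ and $\tilde{c}$ to obtain a tracial correlation $\tilde{c}'$ with $\mfk{w}_{\tilde{G}}(\tilde{c}')\geq1-148(L\varepsilon)^{1/4}$. Although \cref{thm:main} is phrased for a correlation $\braket{\phi}{P^i_aQ^j_b}{\phi}$, its proof works verbatim for commuting operator correlations: it only uses the von Neumann algebra generated by Bob's operators and the density $h_\psi$. In the quantum case, I instead invoke \cref{cor:approx-finite}, which makes $\tilde{c}'$ finite-dimensional.

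Finally, I transfer back using the reverse direction of \cref{lem:degenerate-games}. Set $c'(a,b|i,j)=\tilde{c}'(a,b|i,j)$ for $a\in\tilde{A}_i$ and $0$ otherwise. Operationally, Alice's PVM $\{\hat{p}^i_a\}$ is extended by zero projections on $A_i\setminus\tilde{A}_i$, so $c'$ is still tracial, and finite-dimensional whenever $\tilde{c}'$ is. The lemma then gives $\mfk{w}_G(c')=\mfk{w}_{\tilde{G}}(\tilde{c}')\geq1-148(L\varepsilon)^{1/4}$.
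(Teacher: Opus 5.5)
Your proposal is correct and follows the paper's proof: pass to the non-degenerate game $\tilde{G}$ via \cref{lem:degenerate-games}, apply \cref{thm:main} (or \cref{cor:approx-finite} in the quantum case), and transfer the resulting tracial correlation back with the reverse map of \cref{lem:degenerate-games}. You also verify explicitly two points the paper leaves implicit: $L$ is unchanged because $\tilde{G}$ keeps the same $\mu$, and the proof of \cref{thm:main} applies to commuting operator correlations.
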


\begin{proof}
    If $G$ is non-degenerate, this is exactly the result of \cref{thm:main,cor:approx-finite}. Now, suppose $G$ is degenerate. Let $\tilde{G}$ be the non-degenerate game constructed in \cref{lem:degenerate-games}. By that lemma, there exists a commuting operator correlation $\tilde{c}$ such that $\mfk{w}_{\tilde{G}}(\tilde{c})=\mfk{w}_{G}(c)\geq1-\varepsilon$. Then, by \cref{thm:main}, there exists a tracial correlation $\tilde{c}'$ for $\tilde{G}$ such that $\mfk{w}_{\tilde{G}}(\tilde{c}')\geq 1-148(L\varepsilon)^{1/4}$. Using again \cref{lem:degenerate-games}, there exists a tracial correlation $c'$ for $G$ such that $\mfk{w}_G(c')=\mfk{w}_{\tilde{G}}(\tilde{c}')\geq1-148(L\varepsilon)^{1/4}$. The result for quantum correlations holds using \cref{cor:approx-finite}.
\end{proof}

\section{Applications}\label{sec:applications}

In this section, we study the implications of our main result for CS games and oracularisations,

\subsection{Constraint system games}

\begin{definition}
    A \emph{constraint} over an alphabet $\Sigma$ is a pair $(V,C)$ where $V$ is a finite set, called the set of variables or the context, and $C\subseteq\Sigma^V$. We sometimes call $C$ the constraint, when $V$ is clear; we see $C\subseteq\Sigma^n$ as a constraint on $[n]$. A \emph{constraint system (CS)} over an alphabet $\Sigma$ is a pair $S=(X,\{(V_i,C_i)\}_{i=1}^m)$ where $X$ is a finite set, called the set of variables, and the $(V_i,C_i)$ are constraints over $\Sigma$ such that $V_i\subseteq X$.
\end{definition}

We always work with finite alphabets. A classical satisfying assignment to the constraint system $S$ is a function $f:X\rightarrow\Sigma$ such that $f|_{V_i}\subseteq C_i$ for each $i\in[m]$. We present quantum assignments via nonlocal games. Here, we use the nonsynchronous version of the game, first introduced in~\cite{CM14}.

\begin{definition}
    Given a constraint system $S=(X,\{(V_i,C_i)\}_{i=1}^m)$ over $\Sigma$ and a probability distribution $\pi$ on $[m]$, the \emph{constraint-variable CS game} $G(S,\pi)$ is the nonlocal game with input sets $I=[m]$ and $J=X$, output sets $A_i=C_i$ and $B_j=\Sigma$, and question distribution and predicate functions
    \begin{align*}
        \mu(i,x)&=\begin{cases}\frac{\pi(i)}{|V_i|}&x\in V_i\\0&\text{else}\end{cases}\\
        V(\phi,\sigma|i,x)&=\begin{cases}1&\phi(x)=\sigma\\0&\text{else}\end{cases}.
    \end{align*}
\end{definition}
We may without loss of generality assume that $\pi$ is strictly positive; else, some of the constraints are never asked, and hence we can remove them from the CS.

$G(S,\pi)$ is always a projection game via the projection function $p_{i,x}(\phi)=\phi(x)$.

\begin{corollary}
    Let $S=(X,\{(V_i,C_i)\}_{i=1}^m)$ be a constraint system over $\Sigma$ and let $\pi$ be probability distribution over $[m]$. If $c$ is a commuting operator correlation for $G(S,\pi)$ such that $\mfk{w}_{G(S,\pi)}(c)\geq 1-\varepsilon$, then there exists a tracial correlation $c'$ for $G(S,\pi)$ such that $\mfk{w}_{G(S,\pi)}(c')\geq 1-148(L\varepsilon)^{1/4}$, where $L=\max_i|V_i|$. If $c$ is quantum, $c'$ can be chosen to be finite-dimensional.
\end{corollary}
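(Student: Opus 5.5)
This corollary should follow directly from \cref{cor:main-degenerate} applied to $G=G(S,\pi)$; the only work is computing $L$ for this game. We first note that $G(S,\pi)$ is a projection game with $p_{i,x}(\phi)=\phi(x)$. In fact it is non-degenerate: for fixed $i$, the questions $j$ with $\mu(i,j)>0$ are exactly the variables $x\in V_i$, and $p_i(\phi)=(\phi(x))_{x\in V_i}=\phi$, which is injective on $C_i\subseteq\Sigma^{V_i}$. So \cref{thm:main,cor:approx-finite} apply directly. Invoking \cref{cor:main-degenerate} instead would also be fine, and it gives both the commuting-operator statement and the finite-dimensional refinement when $c$ is quantum.

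Next we compute the constant $L=\max_{i,j:\mu(i,j)\neq0}\frac{\mu(i)}{\mu(i,j)}$ from \cref{thm:main}, where $\mu(i)=\sum_{j}\mu(i,j)$. For constraint $i$ we have $\mu(i)=\sum_{x\in V_i}\frac{\pi(i)}{|V_i|}=\pi(i)$. For $x\in V_i$, $\mu(i,x)=\pi(i)/|V_i|$, so the ratio is $|V_i|$. Maximising over $i$ gives $L=\max_i|V_i|$. We assume $\pi$ is strictly positive, which is harmless since constraints never asked can be removed, as noted after the definition. If some $V_i$ were empty the game would never sample it, so it can be ignored as well.

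Substituting this $L$ into the bound $1-148(L\varepsilon)^{1/4}$ of \cref{cor:main-degenerate} gives the claim. The finite-dimensional statement comes from \cref{cor:approx-finite}.

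There is no real obstacle here. The only point needing care is checking that the ``$\pi$'' in the statement of \cref{thm:main} means the question distribution $\mu$, and not the constraint distribution also called $\pi$ in this section. Once that is clear, $L$ reduces exactly to the maximum context size.
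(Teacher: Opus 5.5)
Your proposal is correct and follows the paper's proof: apply \cref{thm:main} and \cref{cor:approx-finite}, using that $G(S,\pi)$ is non-degenerate because $p_i$ is the identity on $C_i$, and compute $L=\max_i \pi(i)\big/\bigl(\pi(i)/|V_i|\bigr)=\max_i|V_i|$. Your explicit non-degeneracy check (which the paper only asserts in passing) is right, as is your reading of the $\pi$ in \cref{thm:main} as the question distribution $\mu$.
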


\begin{proof}
    This follows directly from \cref{thm:main,cor:approx-finite} by noting that
    \begin{align*}
        L&=\max_{i,x:\,\mu(i,x)\neq 0}\frac{\mu(i)}{\mu(i,x)}=\max_{i}\frac{\pi(i)}{\frac{\pi(i)}{|V_i|}}=\max_i|V_i|.\qedhere
    \end{align*}
\end{proof}

This also implies that results in the weighted algebra setting, which considers only tracial states, can now be translated directly to the game setting, without having to work with a synchronised version of the game with consistency checks.

\begin{corollary}
    Let $S$, $\pi$, and $L$ be as in previous corollary. Let $\mc{A}_{c-v}(S,\pi)$ be the constraint-variable weighted algebra defined in~\cite{CM24}.
    \begin{enumerate}[1.]
        \item If there exists a commuting operator correlation $c$ such that $\mfk{w}_{G(S,\pi)}(c)\geq1-\varepsilon$, then there exists a tracial state $\tau$ on $\mc{A}_{c-v}(S,\pi)$ such that $\defect(\tau)\leq 148(L\varepsilon)^{1/4}$.
        \item If there exists a quantum correlation $c$ such that $\mfk{w}_{G(S,\pi)}(c)\geq1-\varepsilon$, then there exists a finite-dimensional tracial state $\tau$ on $\mc{A}_{c-v}(S,\pi)$ such that $\defect(\tau)\leq 148(L\varepsilon)^{1/4}$.
    \end{enumerate}
\end{corollary}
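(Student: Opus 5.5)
The plan is to combine the preceding corollary with the translation between tracial correlations for $G(S,\pi)$ and tracial states on the weighted algebra $\mc{A}_{c-v}(S,\pi)$ from~\cite{CM24}. Recall that $\mc{A}_{c-v}(S,\pi)$ is generated by PVMs $\{\Phi^i_\phi\}_{\phi\in C_i}$ for each constraint and $\{X^x_\sigma\}_{\sigma\in\Sigma}$ for each variable. It carries a weight vector, and the defect of a tracial state is $\defect(\tau)=\sum_{i}\pi(i)\frac{1}{|V_i|}\sum_{x\in V_i}\sum_{\phi}\tau(\Phi^i_\phi(1-X^x_{\phi(x)}))$, or an equivalent expression in~\cite{CM24}. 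The key property I will use is that, for a tracial correlation $c'(\phi,\sigma|i,x)=\tau(\hat p^i_\phi\hat q^x_\sigma)$ with PVMs in a tracial von Neumann algebra $(\mc{N},\tau)$, the assignment $\Phi^i_\phi\mapsto\hat p^i_\phi$, $X^x_\sigma\mapsto\hat q^x_\sigma$ defines a $\ast$-homomorphism $\mc{A}_{c-v}(S,\pi)\to\mc{N}$. This holds because the algebra is the free product of the group-type algebras of the PVMs with only the PVM relations. Pulling $\tau$ back gives a tracial state $\tilde\tau$ on $\mc{A}_{c-v}(S,\pi)$ satisfying $\defect(\tilde\tau)=1-\mfk{w}_{G(S,\pi)}(c')$.

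For item 1, I apply the previous corollary to the commuting operator correlation $c$ to obtain a tracial $c'$ with $\mfk{w}(c')\geq1-148(L\varepsilon)^{1/4}$. The pullback then gives $\defect(\tilde\tau)\leq148(L\varepsilon)^{1/4}$. The only verification needed is the identity between the defect and the losing probability, which is routine: expand $\mfk{w}(c')=\sum_{i,x}\mu(i,x)\sum_\phi\tau(\hat p^i_\phi\hat q^x_{\phi(x)})$ and use $\sum_\sigma\hat q^x_\sigma=1$.

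For item 2, I use the finite-dimensional version of the corollary (via \cref{cor:approx-finite}). This gives $\mc{N}$ finite-dimensional, so the pulled-back state factors through a finite-dimensional representation and is therefore a finite-dimensional tracial state.

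The main obstacle is matching conventions precisely. The defect in~\cite{CM24} may be defined with respect to a particular weighting of the relations $\Phi^i_\phi X^x_\sigma=0$ for $\phi(x)\neq\sigma$, and possibly with an inner-product form such as $\tau(\Phi^i_\phi X^x_\sigma)$ summed over violated pairs. I would check that their weights are exactly $\mu(i,x)$, so the defect equals the losing probability exactly rather than up to a constant. If the weighting differs by a bounded factor, the bound would change by that factor, and I would adjust the statement or rescale accordingly.
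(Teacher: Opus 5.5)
Your proposal is correct and takes the same route the paper intends. The paper gives no separate proof: it treats the statement as an immediate consequence of the preceding corollary together with the correspondence from~\cite{CM24}. Under that correspondence, a tracial correlation for $G(S,\pi)$ gives, via the canonical $\ast$-homomorphism out of the universal PVM algebra, a tracial state on $\mc{A}_{c-v}(S,\pi)$ whose defect equals the losing probability, and the finite-dimensional case follows in the same way from \cref{cor:approx-finite}. Your caveat about conventions is exactly what the unchanged constant $148$ presupposes, namely that the relation weights in~\cite{CM24} are $\mu(i,x)$, so that $\defect(\tau)=1-\mfk{w}_{G(S,\pi)}(c')$.
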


\subsection{Oracularisations}

\begin{definition}
    Let $G=(I,J,\{A_i\}_{i\in I},\{B_j\}_{j\in J},\mu,V)$ be a nonlocal game. The (projection) \emph{oracularisation} of $G$ is the nonlocal game $G^{\orac}=(I\times J,I\sqcup J,\{A_i\times B_j\}_{(i,j)\in I\times J},\{A_i\}_{i\in I}\cup\{B_j\}_{j\in J},\mu^{\orac},V^{\orac})$, where
    $$\mu^{\orac}((i,j),k)=\begin{cases}\frac{1}{2}\mu(i,j)&k=i\lor k=j\\0&\text{else,}\end{cases}$$
    and $$V^{\orac}((a,b),c|(i,j),k)=\begin{cases}1&V(a,b|i,j)=1\land k=i\land c=a\\1&V(a,b|i,j)=1\land k=j\land c=b\\0&\text{else.}\end{cases}$$
\end{definition}

Oracularisations are clearly projection games via the projection functions $p_{(i,j),i}((a,b))=a$ and $p_{(i,j),j}((a,b))=b$. Prior work~\cite{JNV+21,MS24} has made use of a more involved synchronous version of oracularisation, but we show that this is not necessary.

\begin{corollary}
    Let $G$ be a nonlocal game and let $c$ be a commuting operator correlation for $G^{\orac}$. If $\mfk{w}_{G^{\orac}}(c)\geq 1-\varepsilon$, then there exists a tracial correlation $c'$ for $G^{\orac}$ such that $\mfk{w}_{G^{\orac}}(c')\geq 1-177\varepsilon^{1/4}$. If $c$ is quantum, then $c'$ can be chosen to be finite-dimensional.
\end{corollary}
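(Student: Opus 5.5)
The plan is to apply \cref{cor:main-degenerate} to $G^{\orac}$. The only work is to compute the constant $L$ for the oracularised question distribution and to check that $148 L^{1/4}\leq 177$. First, $G^{\orac}$ is a projection game, with projection functions $p_{(i,j),i}((a,b))=a$ and $p_{(i,j),j}((a,b))=b$. It is in fact non-degenerate whenever both $i$ and $j$ are asked with positive probability, since $(a,b)\mapsto(a,b)$ is injective. Either way, \cref{cor:main-degenerate} applies directly, and it also covers the finite-dimensional statement when $c$ is quantum.

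Next, compute $L=\max\frac{\mu^{\orac}((i,j))}{\mu^{\orac}((i,j),k)}$ over pairs with $\mu^{\orac}((i,j),k)\neq0$. Alice's marginal is
\[
\mu^{\orac}((i,j))=\sum_k\mu^{\orac}((i,j),k)=\tfrac12\mu(i,j)+\tfrac12\mu(i,j)=\mu(i,j),
\]
because the two cases $k=i$ and $k=j$ are distinct elements of the disjoint union $I\sqcup J$. Each nonzero conditional is $\frac12\mu(i,j)$, so $L=2$. This gives a bound of $148(2\varepsilon)^{1/4}=148\cdot2^{1/4}\varepsilon^{1/4}$. Since $2^{1/4}\approx1.1892$, we get $148\cdot2^{1/4}\approx176.0\leq177$.

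One technical point needs care. The predicate $V^{\orac}$ is zero whenever $V(a,b|i,j)=0$, regardless of Bob's answer, so Alice's losing answers $(a,b)$ have no correct answer for Bob. The paper's definition requires $V(a,b|i,j)=\delta_{b,p_{i,j}(a)}$ for a genuine function $p$, so strictly speaking $G^{\orac}$ is not yet of that form. I would handle this by restricting Alice's answer sets to $\set*{(a,b)}{V(a,b|i,j)=1}$, in the same spirit as \cref{lem:degenerate-games}. Any correlation on the full answer sets maps to one on the restricted game by reassigning the losing outputs to some fixed winning pair, which can only increase the winning probability. This is classical post-processing on Alice's side, so it preserves the correlation classes. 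If some $(i,j)$ has no winning pair at all, those questions always lose. In that case $\varepsilon\geq\mu(i,j)$ is large relative to that question, and I would treat it separately, for instance by absorbing it into the trivial regime or by noting that such questions contribute equally to both sides.

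Conversely, a tracial correlation for the restricted game is also one for $G^{\orac}$ with the same winning probability. This restriction step is the main point to check carefully; the rest is arithmetic.
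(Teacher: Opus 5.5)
Your proposal is correct and is the paper's argument: compute $L=2$ from $\mu^{\orac}((i,j))=\mu(i,j)$, apply the main result, and use $148\cdot 2^{1/4}\approx 176.0\leq 177$. Restricting Alice's answers to winning pairs fixes a point the paper glosses over, since its formal definition requires a correct answer for Bob after every answer of Alice. For questions with no winning pair, however, the ``trivial regime'' option fails when $\mu(i,j)$ is tiny, whereas conditioning on the remaining questions (which keeps $L=2$) works, because $\varepsilon^{1/4}-(\varepsilon-\eta)^{1/4}\geq\eta/4$ for the discarded mass $\eta\leq\varepsilon$.
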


\begin{proof}
    This follows directly from \cref{thm:main,cor:approx-finite} by noting that $$L=\max_{i,j,k:\,\mu^{\orac}((i,j),k)\neq 0}\frac{\mu^{\orac}((i,j))}{\mu^{\orac}((i,j),k)}=\max_{i,j:\,\mu(i,j)\neq 0}\frac{\mu(i,j)}{\frac{1}{2}\mu(i,j)}=2$$ and then that $148(2)^{1/4}\leq 177$.
\end{proof}

\bibliographystyle{bibtex/bst/alphaarxiv.bst}
\bibliography{bibtex/bib/full.bib,bibtex/bib/quantum.bib,bibtex/quantum_new.bib}

\newcommand{\etalchar}[1]{$^{#1}$}
\makeatletter\@ifundefined{url}{\newcommand{\url}[1]{\texttt{#1}}}{}\@ifundefined{href}{\newcommand{\href}[2]{\texttt{#2}}}{}\@ifundefined{mathbb}{\newcommand{\mathbb}[1]{#1}}{}\makeatother\providecommand{\DobbendeBruyn}{van
  Dobben de Bruyn} \providecommand{\Do}{Dob}
\begin{thebibliography}{CHTW04}

\bibitem[ABdSZ17]{ABdSZ17}
S.~Abramsky, R.~S. Barbosa, N.~de~Silva, and O.~Zapata.
\newblock {The Quantum Monad on Relational Structures}.
\newblock In K.~G. Larsen, H.~L. Bodlaender, and J.-F. Raskin, editors, {\em
  42nd International Symposium on Mathematical Foundations of Computer Science
  (MFCS 2017)}, volume~83 of {\em Leibniz International Proceedings in
  Informatics (LIPIcs)}, pages 35:1--35:19, Dagstuhl, Germany, 2017. Schloss
  Dagstuhl -- Leibniz-Zentrum f{\"u}r Informatik.
\newblock \\
  \texttt{DOI:\,\href{http://dx.doi.org/10.4230/LIPIcs.MFCS.2017.35}{10.4230/LIPIcs.MFCS.2017.35}}.

\bibitem[Ark12]{Arkh12}
A.~Arkhipov.
\newblock Extending and characterizing quantum magic games.
\newblock {\em arXiv preprint arXiv:1209.3819}, 2012.

\bibitem[CHTW04]{CHTW04}
R.~Cleve, P.~H{\o}yer, B.~Toner, and J.~Watrous.
\newblock Consequences and limits of nonlocal strategies.
\newblock In {\em 19th Annual Conference on Computational Complexity---CCC
  2004}, pages 236--249, 2004.
\newblock \\
  \texttt{DOI:\,\href{http://dx.doi.org/10.1109/CCC.2004.1313847}{10.1109/CCC.2004.1313847}}.

\bibitem[CM14]{CM14}
R.~Cleve and R.~Mittal.
\newblock Characterization of binary constraint system games.
\newblock In {\em Automata, Languages, and Programming: 41st International
  Colloquium, ICALP 2014, Copenhagen, Denmark, July 8-11, 2014, Proceedings,
  Part I 41}, pages 320--331. Springer, 2014.

\bibitem[CM24]{CM24}
E.~Culf and K.~Mastel.
\newblock {RE}-completeness of entangled constraint satisfaction problems,
  2024.
\newblock \\ Online: \url{https://arxiv.org/abs/2410.21223}.

\bibitem[Con76]{Con76}
A.~Connes.
\newblock Classification of injective factors cases {$\textrm{II}_\lambda$},
  {$\textrm{II}_\infty$}, {$\textrm{III}_\lambda$}, {$\lambda \not= 1$}.
\newblock {\em Annals of Mathematics}, 104(1): 73--115, 1976.
\newblock \\
  \texttt{DOI:\,\href{http://dx.doi.org/10.2307/1971057}{10.2307/1971057}}.

\bibitem[dlS22]{dlS22}
M.~de~la Salle.
\newblock Orthogonalization of positive operator valued measures.
\newblock {\em Comptes Rendus. Math{\'e}matique}, 360(G5): 549--560, 2022.

\bibitem[DS14]{DS14}
I.~Dinur and D.~Steurer.
\newblock Analytical approach to parallel repetition.
\newblock In {\em Proceedings of the forty-sixth annual ACM symposium on Theory
  of computing}, pages 624--633, 2014.

\bibitem[DSV15]{DSV15}
I.~Dinur, D.~Steurer, and T.~Vidick.
\newblock A parallel repetition theorem for entangled projection games.
\newblock {\em computational complexity}, 24(2): 201--254, 2015.

\bibitem[Haa75]{Haa75}
U.~Haagerup.
\newblock The standard form of von {N}eumann algebras.
\newblock {\em Mathematica Scandinavica}, 37(2): 271--283, 1975.

\bibitem[Haa79]{Haa79}
U.~Haagerup.
\newblock Lp-spaces associated with an arbitrary von {N}eumann algebra.
\newblock In {\em Algebres d’op{\'e}rateurs et leurs applications en physique
  math{\'e}matique (Proc. Colloq., Marseille, 1977)}, volume 274, pages
  175--184, 1979.

\bibitem[HRS08]{HRS08}
T.~Heinosaari, D.~Reitzner, and P.~Stano.
\newblock Notes on joint measurability of quantum observables.
\newblock {\em Foundations of Physics}, 38: 1133--1147, 2008.

\bibitem[JNV{\etalchar{+}}21]{JNV+21}
Z.~Ji, A.~Natarajan, T.~Vidick, J.~Wright, and H.~Yuen.
\newblock {MIP}* =~{RE}.
\newblock {\em Communications of the ACM}, 64(11): 131--138, 2021.
\newblock \\
  \texttt{DOI:\,\href{http://dx.doi.org/10.1145/3485628}{10.1145/3485628}}.

\bibitem[Lin23]{Lin23}
J.~Lin.
\newblock Tracial embeddable strategies: Lifting mip* tricks to mipco.
\newblock {\em arXiv preprint arXiv:2304.01940}, 2023.

\bibitem[MdlS23]{MdlS23}
A.~Marrakchi and M.~de~la Salle.
\newblock Almost synchronous correlations and {T}omita-{T}akesaki theory.
\newblock {\em arXiv preprint arXiv:2307.08129}, 2023.

\bibitem[MS24]{MS24}
K.~Mastel and W.~Slofstra.
\newblock Two prover perfect zero knowledge for {MIP*}.
\newblock In {\em Proceedings of the 56th Annual ACM Symposium on Theory of
  Computing}, pages 991--1002, 2024.

\bibitem[Pad22]{Pad22}
C.~Paddock.
\newblock Rounding near-optimal quantum strategies for nonlocal games to
  strategies using maximally entangled states.
\newblock {\em arXiv preprint arXiv:2203.02525}, 2022.

\bibitem[PS25]{PS25}
C.~Paddock and W.~Slofstra.
\newblock Satisfiability problems and algebras of boolean constraint system
  games.
\newblock {\em Illinois Journal of Mathematics}, 69(1): 81--107, 2025.

\bibitem[PSS{\etalchar{+}}16]{PSSTW16}
V.~I. Paulsen, S.~Severini, D.~Stahlke, I.~G. Todorov, and A.~Winter.
\newblock Estimating quantum chromatic numbers.
\newblock {\em Journal of Functional Analysis}, 270(6): 2188--2222, 2016.

\bibitem[Rao08]{Rao08}
A.~Rao.
\newblock Parallel repetition in projection games and a concentration bound.
\newblock In {\em Proceedings of the Fortieth Annual ACM Symposium on Theory of
  Computing}, pages 1--10, 2008.

\bibitem[Tak03]{Tak03}
M.~Takesaki.
\newblock {\em Theory of operator algebras II}, volume 125.
\newblock Springer, 2003.

\bibitem[Ter81]{Ter81}
M.~Terp.
\newblock Lp spaces associated with von {N}eumann algebras.
\newblock {\em Notes, Math. Institute, Copenhagen Univ}, 3(4): 5, 1981.

\bibitem[Vid22]{Vid22}
T.~Vidick.
\newblock Almost synchronous quantum correlations.
\newblock {\em Journal of mathematical physics}, 63(2), 2022.

\end{thebibliography}

\end{document}